\newcommand{\email}[1]{
  \texttt{#1}
}
\newtheorem{theorem}{Theorem}
\newtheorem{lemma}[theorem]{Lemma}
\newtheorem{corollary}[theorem]{Corollary}
\newtheorem{proposition}[theorem]{Proposition}
\newtheorem{observation}[theorem]{Observation}
\newtheorem{claim}{Claim}
\begin{document}

\date{}

\title{On constrained intersection representations of graphs and digraphs} 

\author[1]{Ferdinando Cicalese}
\affil[1]{Dept.\ of Computer Science, University of Verona,  Italy\protect\\\email{ferdinando.cicalese@univr.it}}
\author[2]{Cl\'ement Dallard}
\affil[2]{Dept.\ of Informatics, University of Fribourg, Switzerland\protect\\\email{clement.dallard@unifr.ch}}
\author[3]{Martin Milani\v{c}}
\affil[3]{FAMNIT and IAM, University of Primorska, Koper, Slovenia\protect\\\email{mailto:martin.milanic@upr.si}}

\makeatletter
\def\@maketitle{%
\newpage%
\null%
\begin{center}%
    \let\footnote\thanks %
    {\huge\bfseries \@title %
      \par
    }
  \vskip 1.5em
    {\lineskip .5em
     \begin{tabular}[t]{c}
        \baselineskip=12pt
        \@author
     \end{tabular}
     \par
    }
\end{center}
\par
\vskip 1.5em}
\makeatother

\newcommand\extrafootertext[1]{%
    \bgroup
    \renewcommand\thefootnote{\fnsymbol{footnote}}%
    \renewcommand\thempfootnote{\fnsymbol{mpfootnote}}%
    \footnotetext[0]{#1}%
    \egroup
}

\newcommand{\IN}{\textsf{in}}
\newcommand{\DIN}{\textsf{din}}
\newcommand{\WDIN}{\textsf{wdin}}
\newcommand{\UIN}{\textsf{uin}}
\newcommand{\tw}{\mathrm{tw}}
\newcommand{\itw}{\mathrm{itw}}
\newcommand{\mms}{\mathrm{mms}}
\newcommand{\tin}{\mathsf{tree} \textnormal{-} \alpha}
\newcommand{\h}{\eta}
\newcommand{\HH}{\mathcal{H}}
\newcommand{\G}{\mathcal{G}}
\newcommand{\N}{\mathbb{N}}
\renewcommand{\P}{\textsf{P}}
\newcommand{\NP}{\textsf{NP}}
\newcommand{\size}{s}
\renewcommand{\O}{\mathcal{O}}
\renewcommand{\deg}{\textsf{deg}}
\newcommand{\adeg}{\mathrm{\alpha\text{-}\deg}}

\maketitle

\extrafootertext{This work is supported in part by the Slovenian Research and Innovation Agency (I0-0035, research program P1-0285 and research projects J1-3003, J1-4008, J1-4084, J1-60012, and N1-0370) and by the research program CogniCom (0013103) at the University of Primorska.}
\extrafootertext{This paper was presented in part at the 33rd International Symposium on Algorithms and Computation (ISAAC 2022)\cite{ISAAC2022}.}

\begin{abstract}
We study the problem of determining optimal directed intersection representations of DAGs in a model introduced by Kostochka, Liu, Machado, and Milenkovic [ISIT2019]: vertices are assigned color sets so that there is an arc from a vertex $u$ to a vertex $v$ if and only if their color sets have nonempty intersection and $v$ gets assigned strictly more colors than $u$, and the goal is to minimize the total number of colors. 
We show that the problem is polynomially solvable in the class of triangle-free and Hamiltonian DAGs and also disclose the relationship of this problem with several other models of intersection representations of graphs and digraphs.
\end{abstract}

\section{Introduction}

\paragraph{Problem definition and state of the art.}
Given a digraph $D = (V,A)$, a \emph{directed intersection representation} of $D$ is a pair $(U,\varphi)$ where $U$ is a finite set of \emph{colors} and $\varphi$ is a \emph{proper coloring} of $D$, that is, a mapping assigning to each vertex $v\in V$ a set $\varphi(v)\subseteq U$ such that for any two vertices $u,v\in V$, it holds that 
\begin{equation*}
(u,v)\in A \quad\textrm{if and only if}\quad \varphi(u)\cap \varphi(v) \neq \emptyset\textrm{~and~}|\varphi(u)|<|\varphi(v)|\,.
\end{equation*}
The \emph{cardinality} of a directed intersection representation $(U,\varphi)$ of $D$ is defined as the number of colors, that is, $|U|$.
Note that if $(U,\varphi)$ is a directed intersection representation of a digraph $D$ and $W = (v_1,\ldots, v_k)$ is a walk in $D$, then $|\varphi(v_1)|<\ldots <|\varphi(v_k)|$, which implies that the directed graph $D$ is acyclic (that is, a \emph{DAG}).
On the other hand, Kostochka et al.~\cite{kostochka2019directed} showed that every DAG admits a directed intersection representation.
They initiated a study of the following invariant of DAGs.
The \emph{directed intersection number} (\emph{DIN}, for short) of a DAG $D = (V,A)$---denoted by $\DIN(D)$---is the smallest cardinality of a directed intersection representation of $D$.  
In \cite{kostochka2019directed,MR4231959}, the authors focused on characterizing the extremal values of DIN. 
They showed that: 
\begin{itemize} 
\item for every DAG $D$ with $n$ vertices, it holds that $\DIN(D) \leq \frac{5n^2}{8} -\frac{3n}{4}+1$;
\item for every $n$ there is a DAG $D$ with $n$ vertices such that
$\DIN(D)\ge \frac{9n^2}{16}-\frac{n}{4}-\frac{7}{4}$.
\end{itemize}
In \cite{COCOON2020,IWOCA2022}, Caucchiolo and Cicalese  studied the computational complexity of determining $\DIN(D)$. 
They showed that the problem of computing $\DIN(D)$ is $\NP$-hard even when $D$ is an arborescence (a tree with all the edges oriented away from the root). 
Moreover, for general DAGs and any $\epsilon>0$, the problem does not admit an $n^{1-\epsilon}$ approximation unless $\P = \NP$. 
Conversely, for the case of arborescences the problem is shown to be in APX, and in \cite{IWOCA2022} the authors also provide an asymptotic fully polynomial time approximation scheme. 

\paragraph{The main result.}
In this paper we continue the quest for islands of tractability in the complexity landscape for the problem of computing directed intersection representations of DAGs initiated in \cite{COCOON2020,IWOCA2022}. 
We focus on a class of graphs that, in the analysis of \cite{kostochka2019directed}, appears to include the instances that are the most demanding in terms of the DIN value. 
In fact, in \cite{kostochka2019directed}, a key point in the construction of the lower bound on the extremal value of $\DIN(D)$ is to consider DAGs that are both Hamiltonian and triangle-free.\footnote{Liu et al.~\cite{MR4231959} leave it as an open question whether for every $n$ the maximum value of $\DIN(D)$ among the DAGs with $n$ vertices is attained by one that is also Hamiltonian.} 
Here we show that for every $D$ in the class of Hamiltonian and triangle-free DAGs, computing the value of $\DIN(D)$ is a tractable problem solvable in time $\O(n^3)$.

A key element for obtaining such a result is 
the fact that for any Hamiltonian and triangle-free DAG $D$ it is possible to define a demand function $b$ on the vertices such that the value of $\DIN(D)$ can be exactly characterized in terms of the value of a maximum $b$-matching in the underlying graph of $D$, a parameter that can be computed in polynomial time~\cite{Gabow83,Pulleyblank73}.

\paragraph{Additional results, related problems, and literature.}
The proof of the above result leads us to introduce several generalizations and variants of intersection representations of 
graphs and digraphs (defined in \cref{subsec:INs}). 
We believe that these variants, for which we are able to prove interesting  results on their relations in terms of minimal intersection representations (summarized in \cref{Problems-relationships}), might be of independent interest.  

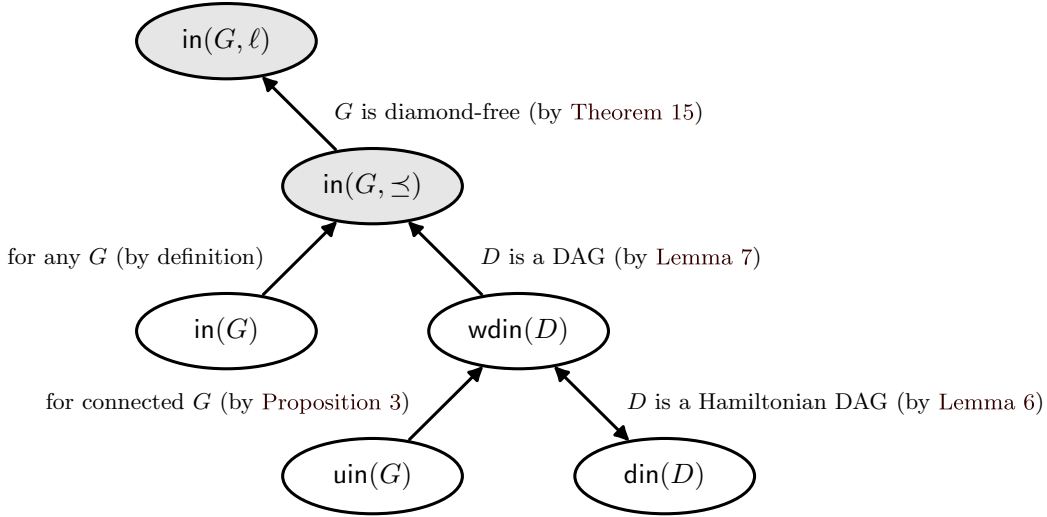
\begin{figure}[!ht]
    \centering
\resizebox{!}{.325\textheight}{
\begin{tikzpicture}[->=Latex,very thick,scale=0.8]
    \node[draw,ellipse,minimum height=25pt,minimum width=70pt,align=center,fill=black!10] (Gprec) at (0,0) {$\strut \IN(G,\preceq)$};
    \node[draw,ellipse,minimum height=25pt,minimum width=70pt,align=center,fill=black!10] (Gell) at ($(Gprec)+(-2.5,2.5)$) {$\strut \IN(G,\ell)$};
    \node[draw,ellipse,minimum height=25pt,minimum width=70pt,align=center] (G) at ($(Gprec)+(-2.5,-2.5)$) {$\strut \IN(G)$};
    \node[draw,ellipse,minimum height=25pt,minimum width=70pt,align=center] (Dprec) at ($(Gprec)+(2.5,-2.5)$) {$\strut \WDIN(D)$};
    \node[draw,ellipse,minimum height=25pt,minimum width=70pt,align=center] (D) at ($(Dprec)+(2.5,-2.5)$) {$\strut \DIN(D)$};
    \node[draw,ellipse,minimum height=25pt,minimum width=70pt,align=center] (UD) at ($(Dprec)+(-2.5,-2.5)$) {$\strut \UIN(G)$};
    
    \draw [-{Latex[round,length=2.5mm,width=2.5mm]}] (Gprec) to node[midway,right,draw=none,rectangle,outer sep=10pt] {\footnotesize $G$ is diamond-free (by \cref{thm:diamond-free})} (Gell);
    \draw [-{Latex[round,length=2.5mm,width=2.5mm]}] (G) to node[midway,left,draw=none,rectangle,outer sep=10pt] {\footnotesize for any $G$ (by definition)} (Gprec);
    \draw [-{Latex[round,length=2.5mm,width=2.5mm]}] (Dprec) to node[midway,right,draw=none,rectangle,outer sep=10pt] {\footnotesize $D$ is a DAG (by \cref{leqDIN-preceqIN})} (Gprec);
    \draw [{Latex[round,length=2.5mm,width=2.5mm]}-{Latex[round,length=2.5mm,width=2.5mm]}] (D) to node[midway,right,draw=none,rectangle,outer sep=10pt] {\footnotesize $D$ is a Hamiltonian DAG (by \cref{lemma:DAG-weakDAG-Hamiltonian})} (Dprec);
    \draw [-{Latex[round,length=2.5mm,width=2.5mm]}] (UD) to node[midway,left,draw=none,rectangle,outer sep=10pt] {\footnotesize for connected $G$ (by \cref{WDIN and UIN equal on connected graphs})} (Dprec);
\end{tikzpicture}
}
\caption{The relationships among the different types of intersection representations considered in this paper. 
The arc $\mathit{Prob}(a) \rightarrow \mathit{Prob}(b)$ is to be read ``$\mathit{Prob}(a)$ is a special case of $\mathit{Prob}(b)$.'' A label on an arc specifies the restriction on the class of instances for which the relation is proved to hold, along with a reference to the statement justifying the relation.
The highlighted parameters are introduced in this paper.}\label{Problems-relationships}
\end{figure}

It is known that any finite undirected graph $G$ admits an intersection representation given by a family of finite sets associated to its vertices, such that two vertices are adjacent if and only if their associated sets intersect. 
The minimum cardinality of the ground set of such a family is referred to as the \emph{intersection number} (\emph{IN}, for short) of the graph $G$ and denoted by $\IN(G)$.
Erd\H{o}s, Goodman and P\'osa \cite{erdos1966} showed that $\IN(G)$ equals the minimum number of cliques needed to cover the edges of $G$, i.e., the size of a minimum edge clique cover of $G$.
Determining this value was proved to be $\NP$-hard in \cite{Orlin1977} (see also \cite{KouSW78}). 
By \cite{feige1998zero,LundY94}, neither of the two problems is approximable within a factor of $|V|^{\epsilon}$ for any $\epsilon > 0$ unless $\P = \NP$. 
Applications of intersection representations and clique covers are found in areas as diverse as computational geometry, matrix factorization, compiler optimization, applied statistics, resource allocations, etc; see, e.g., the survey papers \cite{pullman1983clique,MR556057,roberts1985applications}, and the comprehensive introduction of \cite{cygan2016known}. 

In this paper several new variants are considered which contribute to this rich literature and in particular to the approach of \cite{roberts1985applications} of studying constrained versions of intersection representation and their applications. 
\medskip

The following practical scenario can be modeled by a constrained variant of the intersection number of undirected graphs, which is considered in the series of reductions leading to the proof of our main result. 

There is a shared resource (for example, a wireless communication channel) and a set of participants who want to use the resource.
However, no two participants are willing to share the resource at the same time unless they need to do it for accomplishing a common task.
In the wireless communication example, you might imagine that using the channel at the same time means to be possibly eavesdropped on while you do need to share temporally at least once with whomever you want to communicate with.  
We say that two participants are \emph{compatible} (with each other) if they need to accomplish a common task.
We assume that the resource can be used for an arbitrary number of time periods, where in each time period only a subset of pairwise compatible participants can share the resource.
Furthermore, if a set $S$ of pairwise compatible participants shares the resource in a certain time period, then the common task of any pair of participants in $S$ can be carried out in this period.
Since the use of the resource is expensive, our goal is to design a schedule of assigning the participants to time slots for using the resource that minimizes the total number of temporal slots, such that all pairs of participants can accomplish their tasks without ever using a resource together with an incompatible participant.
This problem can be modeled as the problem of computing the intersection number of the compatibility graph.
If we also assume that every participant requests to take part in at least a certain number of slots, we obtain the $\ell$-constrained intersection number, where $\ell(v)$ is the desired lower bound on the number of slots for participant $v$ (see \cref{subsec:INs} for definitions).

\section{Notations and definitions}

We denote by $\mathbb{N}$ the set of all positive integers and by $\mathbb{Z}_+$ the set of all nonnegative integers.
All graphs in this paper are finite and simple (that is, without loops and multiple edges), but may be directed or undirected. 
We will use the term \emph{graph} to refer to an undirected graph and the term \emph{digraph} to refer to a directed graph.

\paragraph{Definitions for graphs.}
We use standard graph theory terminology, see, e.g., West~\cite{MR1367739}.
A \emph{graph} is a pair $G = (V,E)$ where $V=V(G)$ is a finite set of \emph{vertices} and $E=E(G)$ is a set of $2$-element subsets of $V$ called \emph{edges}. 
Two vertices $u$ and $v$ in a graph $G = (V,E)$ are \emph{adjacent} if $\{u,v\}\in E$.
A vertex in a graph is \emph{universal} if it is adjacent to all other vertices.
A graph is said to be \emph{nontrivial} if it contains more than one vertex.
A \emph{vertex cover} in $G$ is a set $C$ of vertices such that every edge has at least one endpoint in $C$.
Let $b:V\to \mathbb{Z}_+$ be a capacity function on the vertices of $G$. 
A $b$-matching of $G$ is a function $x: E \to \mathbb{Z}_+$ 
such that for each vertex $v$ it holds that 
$\sum_{e \in E_v} x(e) \leq b(v),$ where $E_v$ denotes the set of edges incident with $v$. 
A maximum weight $b$-matching of $G$ is a $b$-matching of $G$ such that the total weight $\sum_{e \in E} x(e)$ is maximum among all $b$-matchings of $G$.
We use $\nu(G,b)$ to denote the total weight of a maximum weight $b$-matching of $G.$
Pulleyblank~\cite{Pulleyblank73} (see also~\cite{MR1956926}) showed that given a graph $G=(V,E)$ and a capacity function $b:V\to \mathbb{Z}_+$, a maximum weight $b$-matching in $G$ can be computed in time $\O(B \cdot |V|^2)$ where $B = 1+\max_{v\in V}b(v)$.
If $B$ is superpolynomial in $|V|$ and/or for the case of $G$ being sparse, we can use instead the algorithm of Gabow~\cite{Gabow83} which runs in time $\mathcal{O}(|E|^2  \cdot \log |V| \cdot \log B)$.
Since we can choose the best of the two above options for the computation of a maximum weight $b$-matching, we have the following.

\begin{theorem}\label{max-weight-b-matching}
Given a graph $G=(V,E)$ and a capacity function $b:V\to \mathbb{Z}_+$, a maximum weight $b$-matching in $G$ can be computed in time
$\O\left(\min\{B \cdot |V|^2, |E|^2  \cdot \log |V| \cdot \log B\}\right)$, where $B = 1+\max_{v\in V}b(v)$.
\end{theorem}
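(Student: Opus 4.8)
The plan is to combine the two algorithmic results cited immediately above into a single bound by selecting, for any given input, whichever of the two algorithms is asymptotically faster. First I would invoke Pulleyblank's algorithm~\cite{Pulleyblank73}, which computes a maximum weight $b$-matching in time $\O(B \cdot |V|^2)$, and Gabow's algorithm~\cite{Gabow83}, which solves the same problem in time $\O(|E|^2 \cdot \log|V| \cdot \log B)$. Both are correct for an arbitrary graph $G$ and an arbitrary capacity function $b:V\to\mathbb{Z}_+$, so either one on its own already yields a polynomial-time guarantee in the appropriate parameters; the point of the theorem is merely to record the better of the two.

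Second, I would observe that the two running-time expressions can be evaluated from the input in time dominated by reading the input itself: the quantities $B=1+\max_{v\in V}b(v)$, $|V|$, and $|E|$ are all obtainable by a single linear scan. Hence, before committing to either algorithm, we may compare $B\cdot|V|^2$ with $|E|^2\cdot\log|V|\cdot\log B$ and dispatch to the one whose bound is smaller, giving total running time $\O(\min\{B\cdot|V|^2,\ |E|^2\cdot\log|V|\cdot\log B\})$, as claimed. Alternatively, one may run the two algorithms in an interleaved fashion and halt as soon as one of them returns; this achieves the same bound up to a constant factor without comparing the expressions in advance.

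The only point requiring care is to confirm that both cited algorithms solve the maximum weight $b$-matching problem in exactly the form used here, namely maximizing $\sum_{e\in E} x(e)$ subject to $\sum_{e\in E_v} x(e)\le b(v)$ for every vertex $v$, over functions $x:E\to\mathbb{Z}_+$, rather than a closely related variant such as a perfect $b$-matching or a degree-constrained subgraph. Since these are treated as black-box results from the literature, there is no genuine obstacle here; the argument is simply a ``take the better of two algorithms'' combination, and I expect the entire proof to be a few lines long.
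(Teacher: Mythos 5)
Your proposal is correct and follows essentially the same route as the paper: the theorem is justified there simply by citing Pulleyblank's $\O(B \cdot |V|^2)$ algorithm and Gabow's $\O(|E|^2 \cdot \log |V| \cdot \log B)$ algorithm and choosing the better of the two for the given input. Your additional remarks (that the dispatch comparison costs only linear time, and that interleaving the two algorithms also works) are fine but not needed beyond what the paper states.
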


The \emph{triangle} is the complete graph on three vertices and the \emph{diamond} is the graph obtained from the $4$-vertex complete graph by removing an edge.
A graph $G$ is said to be \emph{triangle-free} if no induced subgraph of $G$ is isomorphic to the triangle; diamond-free graphs are defined similarly.
A \emph{clique} in a graph $G$ is a set of pairwise adjacent vertices.
A clique is \emph{maximal} if it is not included in any larger clique.

\paragraph{Definitions for digraphs.}
A \emph{digraph} is a pair $D = (V,A)$ where $V=V(D)$ is a finite set of \emph{vertices} and $A=A(D)$ is a set of ordered pairs of vertices called \emph{arcs}. 
Two vertices $u$ and $v$ in a digraph $D = (V,A)$ are \emph{adjacent} if $(u,v)\in A$ or $(v,u) \in A$.
A \emph{walk} in a digraph $D$ is a sequence $(v_1,\ldots, v_k)$ of vertices of $D$ such that $(v_i,v_{i+1})\in A$ for all $i\in \{1,\ldots,k-1\}$.
A \emph{path} in $D$ is a walk in which all vertices are pairwise distinct.
Given a positive integer $k$, a \emph{cycle of length $k$} in $D$ is a path $(v_1,\ldots, v_k)$ such that $(v_k,v_1)\in A$.
Note that a cycle of length one consists of a vertex $v$ at which $D$ has a loop (that is, an arc of the form $(v,v)$), and a cycle of length two consists of a pair of vertices $u,v$ such that both arcs $(u,v)$ and $(v,u)$ exist.
A digraph is \emph{acyclic} if it contains no cycles; a directed acyclic graph is referred to as a \emph{DAG}.
A digraph is \emph{Hamiltonian} if it has a path containing every vertex.
For a digraph $D = (V,A)$, the \emph{underlying graph of $D$} is the undirected graph $U(D) = (V,E)$ in which two distinct vertices are adjacent if and only if they are adjacent in $D$.
A digraph is \emph{strongly connected} if for every two vertices $u$ and $v$ in $D$, there is a $u,v$-path in $D$.
A \emph{strongly connected component} of $D$ is a maximal strongly connected subgraph.
A digraph $D$ is said to be \emph{triangle-free} if the underlying graph of $D$ is triangle-free; diamond-free digraphs are defined similarly.

\paragraph{Common definitions for graphs and digraphs.}
Let $G$ be a graph or a digraph.
An \emph{independent set} in $G$ is a set of pairwise nonadjacent vertices.
Let $f$ be a function from $V(G)$ to $\mathbb{Z}_+$.
Given a set $S\subseteq V(G)$, we denote by $f(S)$ the value $\sum_{v \in S} f(v)$.
Let $\alpha(G,f)$ denote the maximum value of $f(S)$ over all independent sets $S$ in $G$.
We say that $G$ is \emph{bipartite} if it admits a \emph{bipartition}, that is, a partition of its vertex set into two (possibly empty) independent sets.
Given a vertex $u\in V(G)$, the \emph{degree} of $u$ in $G$, denoted by $\deg_G(u)$ (or simply by $\deg(u)$ when the graph or digraph is clear from the context), is the number of vertices $v\in V(G)$ such that $u$ and $v$ are adjacent in $G$.
Note that if $D$ is a digraph without cycles of length one or two (in particular, if $D$ is a DAG), then the vertex degrees are the same in $D$ and in the underlying graph $U(D)$.

A \emph{partially ordered set} (or: a \emph{poset}) is a pair $(V, \preceq)$ where $V$ is a finite set and $\preceq$ is a binary relation on $V$ that is reflexive, antisymmetric, and transitive.
We write $u\prec v$ if $u\preceq v$ and $u\neq v$.
A poset element $v\in V$ is \emph{minimal} if there is no element $u\in V\setminus\{v\}$ such that $u\prec v$.

\subsection{Intersection representations of graphs and digraphs} \label{subsec:INs}

In this section we introduce several variants of intersection representation for graphs and digraphs. 
They are used in the proof of our main result, which consists of a sequence of reductions from one variant to another.
For the sake of completeness, we also recall the definitions of the intersection representation of an undirected graph and directed intersection representation of a digraph.

An \textbf{intersection representation of an undirected graph} $G = (V,E)$ is a pair $(U,\varphi)$ where $U$ is a finite set and $\varphi$ is a mapping assigning to each vertex $v\in V$ a set $\varphi(v)\subseteq U$ such that for any two distinct vertices $u,v\in V$, it holds that $\{u,v\}\in E$ if and only if $\varphi(u)\cap \varphi(v) \neq \emptyset$.
The \emph{cardinality} of an intersection representation $(U,\varphi)$ of $G$ is defined as the cardinality of $U$.
The \emph{intersection number} of an undirected graph $G$, denoted by $\IN(G)$, is the smallest cardinality of an intersection representation of $G$.

Let $G = (V,E)$ be a graph and let $\ell:V\to \mathbb{Z}_+$ be a demand function on the vertices of $G$.
An \textbf{$\ell$-constrained intersection representation of an undirected graph} $G$ is an intersection representation $(U,\varphi)$ of $G$ such that $|\varphi(v)|\ge \ell(v)$ for all $v\in V$.
The \emph{$\ell$-constrained intersection number} of $G$, denoted by $\IN(G,\ell)$, is the smallest cardinality of an $\ell$-constrained intersection representation of $G$.

A \emph{partially ordered graph} is a pair $(G,\preceq)$, where $G=(V,E)$ is an undirected graph and $\preceq$ is a partial order on the vertex set of $G$ (that is, $(V,\preceq)$ is a poset).  
An \textbf{intersection representation of a partially ordered graph} $(G,\preceq)$ is a pair $(U,\varphi)$ where $U$ is a finite set of \emph{colors} and $\varphi$ is a \emph{proper coloring} of $(G, \preceq)$, that is, a mapping assigning to each vertex $v\in V$ a set $\varphi(v)\subseteq U$ such that for any two distinct vertices $u,v\in V$, it holds that
\begin{itemize}
    \item  $\{u,v\}\in E$ if and only if $\varphi(u)\cap \varphi(v) \neq \emptyset$;
    \item if $u \prec v$, then $|\varphi(u)| < 
    |\varphi(v)|$.
\end{itemize}
The \emph{cardinality} of an intersection representation $(U,\varphi)$ of a partially ordered graph $(G,\preceq)$ is defined as the cardinality of $U$.
The \emph{intersection number} of a partially ordered graph $(G,\preceq)$, denoted by $\IN(G, \preceq)$, is the smallest cardinality of an intersection representation of $(G,\preceq)$.

One can view the $\ell$-constrained intersection number of a graph $G$ and the intersection number of a partially ordered graph $(G,\preceq)$ as  constrained variants of the intersection number of the graph $G$, placing it in a general framework proposed by Roberts in~\cite{MR556057} along with numerous other variants of the intersection number studied in the literature.

When discussing algorithms on partially ordered graphs, we assume that a partially ordered graph $(G,\preceq)$ is represented with the adjacency lists of the graph $G$ and an arbitrary DAG $D$ on the same vertex set such that $u\preceq v$ if and only if there exists a directed $u,v$-path from $u$ to $v$ in $D$.

A \textbf{directed intersection representation of a DAG} $D$ is a pair $(U,\varphi)$ where $U$ is a finite set of \emph{colors} and $\varphi$ is a \emph{proper coloring} of $D$, that is, a mapping assigning to each vertex $v\in V$ a set $\varphi(v)\subseteq U$ such that for any two vertices $u,v\in V$, it holds that 
\begin{equation} \label{eq:DIN-def}
(u,v)\in A \quad\textrm{if and only if}\quad \varphi(u)\cap \varphi(v) \neq \emptyset\textrm{~and~}|\varphi(u)|<|\varphi(v)|\,.
\end{equation}
The \emph{cardinality} of a directed intersection representation $(U,\varphi)$ of $D$ is defined as the number of colors, that is, $|U|$.
The \emph{directed intersection number} of a DAG $D$, denoted by $\DIN(D)$, is the smallest cardinality of a directed intersection representation of $D$.

A \textbf{weak directed intersection representation of a digraph} $D = (V, A)$ is a pair $(U,\varphi)$ where $U$ is a finite set of \emph{colors} and $\varphi$ is a \emph{weak proper coloring} of $D$, that is, a mapping assigning to each vertex $v\in V$ a set $\varphi(v)\subseteq U$ such that for any two distinct vertices $u,v\in V$, it holds that $(u,v)\in A$ if and only if $\varphi(u)\cap \varphi(v) \neq \emptyset$ and $|\varphi(u)|\leq|\varphi(v)|$.
Note that, unlike the definition of a directed intersection representation given in \cref{eq:DIN-def}, the cardinality constraint is expressed as a weak inequality.
This allows for representation of digraphs that are not acyclic.
More precisely, the family of digraphs that admit a weak directed intersection representation is a common generalization of graphs and DAGs (see \cref{prop:weak-characterization}).
Let $D$ be a digraph that admits a weak directed intersection representation.
The \emph{cardinality} of a weak directed intersection representation $(U,\varphi)$ of $D$ is defined as the number of colors, that is, $|U|$.
The \emph{weak directed intersection number} of $D$, denoted by $\WDIN(D)$, is the smallest cardinality of a weak directed intersection representation of $D$.

\begin{proposition}\label{prop:weak-characterization}
Let $D$ be a digraph.
Then, $D$ admits a weak directed intersection representation if and only if each strongly connected component $C$ of $D$ is obtained from its underlying graph $U(C)$ by replacing each edge with a pair of oppositely oriented arcs.  
\end{proposition}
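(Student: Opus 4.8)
The plan is to prove both implications by exploiting a single structural fact: along any arc $(u,v)$, a weak proper coloring forces $|\varphi(u)| \le |\varphi(v)|$, so cardinalities are \emph{non-decreasing along every walk} and hence \emph{constant along every closed walk}.

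For the forward (``only if'') direction, suppose $(U,\varphi)$ is a weak directed intersection representation of $D$ and let $C$ be a strongly connected component. First I would show that $|\varphi|$ is constant on $V(C)$: given $u,v \in V(C)$, strong connectivity provides a $u,v$-walk and a $v,u$-walk whose concatenation is a closed walk through both; since cardinalities are non-decreasing along it and return to their starting value, they are all equal, whence $|\varphi(u)| = |\varphi(v)|$. Now take any two adjacent $u,v \in V(C)$ and assume without loss of generality $(u,v) \in A$. Then $\varphi(u)\cap\varphi(v)\neq\emptyset$, and since $|\varphi(v)| = |\varphi(u)|$ we also have $|\varphi(v)| \le |\varphi(u)|$; by the defining equivalence this yields $(v,u)\in A$. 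Thus every edge of $U(C)$ is covered by a pair of oppositely oriented arcs, which is exactly the claimed form (single-vertex components being vacuously of this form).

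For the backward (``if'') direction, I would reconstruct a representation from an ordinary intersection representation of the underlying graph together with a careful cardinality assignment. Start from any intersection representation $(W,\psi)$ of $U(D)$, which exists since every graph admits one, so that $\psi(u)\cap\psi(v)\neq\emptyset$ iff $u,v$ are adjacent in $D$. Order the strongly connected components $C_1,\dots,C_m$ according to a topological order of the condensation (a DAG), so that every inter-component arc runs from a lower-indexed to a higher-indexed component and, in particular, all arcs between two distinct components point the same way. Then choose target cardinalities $T_1 < T_2 < \cdots < T_m$ with $T_i \ge \max_{v\in C_i}|\psi(v)|$, obtained greedily (taking $T_i$ larger than $T_{i-1}$ and at least the required lower bound), and pad each $v \in C_i$ with $T_i - |\psi(v)|$ \emph{fresh private colors} to obtain $\varphi(v)$ with $|\varphi(v)| = T_i$. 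Private colors create no new intersections, so $\varphi(u)\cap\varphi(v)\neq\emptyset$ iff $u,v$ are adjacent. It then remains to verify the defining equivalence: if $(u,v)\in A$, then $u,v$ are adjacent and lie either in a common component (equal cardinalities) or in components $C_i,C_j$ with $i<j$ (so $T_i<T_j$), giving $|\varphi(u)|\le|\varphi(v)|$ in both cases; conversely, if $\varphi(u)\cap\varphi(v)\neq\emptyset$ and $|\varphi(u)|\le|\varphi(v)|$, then $u,v$ are adjacent, and either they share a component $C_i$ (where symmetry guarantees $(u,v)\in A$) or they lie in distinct components, where $|\varphi(u)|\le|\varphi(v)|$ forces $T_i<T_j$, hence $i<j$, so the arc between the two components is $(u,v)$.

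I expect the only genuine obstacle to be in the backward direction: arranging the cardinalities so that $|\varphi|$ is simultaneously \emph{constant} inside each component (so intra-component adjacencies become arcs in both directions) and \emph{strictly increasing} across the condensation (so no spurious backward inter-component arc is ever created), all without disturbing the intersection pattern. The private-color padding together with the strict inequalities $T_1<\cdots<T_m$ is precisely what reconciles these two requirements; the remaining verifications reduce to routine case checks on whether two vertices share a component.
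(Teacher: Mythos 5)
Your proof is correct and follows essentially the same route as the paper's: the forward direction uses strong connectivity to force equal color-set cardinalities on each component (the paper does this with a single return path rather than a closed walk, but the argument is the same), and the backward direction pads color sets so that cardinalities are constant within each component and strictly increase along a topological order of the condensation. The only, harmless, difference is that you start from one intersection representation of the whole underlying graph $U(D)$, whereas the paper glues together per-component representations and introduces a fresh color for each inter-component arc; your variant renders that extra step unnecessary.
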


\begin{proof}
Suppose first that $D=(V,A)$ admits a weak directed intersection representation $(U,\varphi)$ and let $C$ be a strongly connected component of $D$.
Let $(u,v)\in A$ be an arc in $C$.
Then, $\varphi(u)\cap \varphi(v)\neq \emptyset$.
Furthermore, since $C$ is strongly connected, there is a $v,u$-path $P = (v_1,\ldots, v_k)$ in $C$, with $v = v_1$ and $u = v_k$.
Then $|\varphi(v)| = |\varphi(v_1)|\le \ldots \le |\varphi(v_k)| = |\varphi(u)|$.
Consequently, $\varphi(v)\cap \varphi(u)\neq \emptyset$ and $|\varphi(v)| \le |\varphi(u)|$, implying that $(v,u)$ is also an arc in $D$.
Hence, $C$ is obtained from its underlying graph $U(C)$ by replacing each edge with a pair of oppositely oriented arcs. 

For the converse direction, suppose that each strongly connected component $C$ of $D$ is obtained from its underlying graph $U(C)$ by replacing each edge with a pair of oppositely oriented arcs.
We construct a weak directed intersection representation $(U,\varphi)$ of $D$ using the following procedure.
\begin{enumerate}
\item For each strongly connected component $C$ of $D$, let $\varphi_C$ be an intersection representation of the graph $U(C)$.
We assume that if $C$ and $C'$ are two distinct strongly connected components, then $\varphi_C$ uses distinct colors from $\varphi_{C'}$.
\item\label{scc} For all $v\in V$, set $\varphi(u) = \varphi_C(v)$ where $C$ is the strongly connected component containing~$v$.
\item \label{cross-arcs} For every arc $a = (u,v)$ of $D$ such that $u$ and $v$ belong to different strongly connected components of $D$, we introduce a new color $c_a$ and add it to both $\varphi(u)$ and $\varphi(v)$.
\item \label{top-sort} Let $(C_1,\ldots, C_k)$ be an ordering of the strongly connected components of $D$ such that for all arcs $(u,v)$ of $D$ where $u \in V(C_i)$ and $v \in V(C_j)$, with $i \neq j$, we have $i < j$.
(Such an ordering is known to exist, since there cannot be a cycle including vertices from different strongly connected components.)
Let $n_1 = \max_{v\in V(C_1)}|\varphi(v)|$ and for all $i \in \{2,\ldots, k\}$, let $n_i = \max\{n_{i-1}+1,\max_{v\in V(C_i)}|\varphi(v)|\}$.
For each $i\in \{1,\ldots, k\}$ and each vertex $v\in V(C_i)$, we add $n_i-|\varphi(v)|$ new colors to $\varphi(v)$.
\end{enumerate}
We claim that $(U,\varphi)$ is a weak directed intersection representation of $D$.
Let $u$ and $v$ be two distinct vertices of $D$.
First, observe that if $u$ and $v$ are nonadjacent, then $\varphi(u)\cap \varphi(v) =\emptyset$.
Suppose now that $(u,v)\in A$.
If $u$ and $v$ belong to the same strongly connected component $C_i$ of $D$, then $\varphi(u)\cap \varphi(v) \neq \emptyset$ by step \ref{scc} and $|\varphi(u)| = |\varphi(v)| = n_i$ by step \ref{top-sort}.
If $u$ and $v$ belong to different strongly connected components $C_i$ and $C_j$ of $D$, respectively, then $i<j$ and $c_{a}\in \varphi(u)\cap \varphi(v)$, where $a = (u,v)$, by step \ref{cross-arcs}; furthermore, $|\varphi(u)| = n_i < n_j = |\varphi(v)|$ by step \ref{top-sort}.
This shows that $(U,\varphi)$ is a weak directed intersection representation of $D$, as claimed.
\end{proof}

\Cref{prop:weak-characterization} implies that every DAG admits a weak directed intersection representation.
Note, however, that not every directed intersection representation is also a weak directed intersection representation. Indeed, in a directed intersection representation a pair of nonadjacent vertices can have intersecting color sets of the same cardinality, which for a weak intersecting representation would imply the presence of both arcs between the two vertices.

The weak directed intersection number has been previously considered for DAGs in \cite{Zeggiotti21}, where it was shown that the problem of computing $\WDIN(D)$ is $\NP$-hard when $D$ is an arbitrary DAG but polynomially solvable if $D$ is an arborescence, which contrasts with the $\NP$-hardness of computing $\DIN(D)$ for arborescences.
Furthermore, if $G$ is a connected graph and $D$ is the digraph obtained from $G$ by replacing each edge with a pair of oppositely oriented arcs, then $\WDIN(D)$ equals the \emph{uniform intersection number} of $G$, denoted by $\UIN(G)$ and defined as the minimum cardinality of a set $U$ such that $G$ admits an intersection representation over the set $U$ such that all vertices are assigned sets with the same cardinality (see, e.g.,~\cite{MR1068506,MR1132935,MR1623031}).

\begin{proposition}\label{WDIN and UIN equal on connected graphs}
Let $G$ be a connected graph and let $D$ be a digraph obtained from $G$ by replacing each edge $\{u,v\}$ of $G$ with a pair of oppositely oriented arcs between $u$ and $v$.
Then, $\UIN(G) = \WDIN(D)$.
\end{proposition}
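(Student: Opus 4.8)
The plan is to prove the equality by establishing the two inequalities $\WDIN(D)\le\UIN(G)$ and $\UIN(G)\le\WDIN(D)$ separately, in each case by reinterpreting a representation of one kind directly as a representation of the other. The only nontrivial ingredient, used in the second inequality, is that the connectivity of $G$ together with the bidirectionality of the arcs of $D$ forces all color sets to have the same cardinality.

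For the inequality $\WDIN(D)\le\UIN(G)$, I would start from an optimal uniform intersection representation $(U,\varphi)$ of $G$, so that $|U|=\UIN(G)$, all the sets $\varphi(v)$ have a common cardinality, and $\{u,v\}\in E(G)$ if and only if $\varphi(u)\cap\varphi(v)\neq\emptyset$. Since $D$ is obtained from $G$ by replacing each edge with a pair of oppositely oriented arcs, for every pair of distinct vertices we have $(u,v)\in A(D)$ if and only if $\{u,v\}\in E(G)$. Because all color sets have the same cardinality, the inequality $|\varphi(u)|\le|\varphi(v)|$ holds automatically for every ordered pair, so the condition ``$\varphi(u)\cap\varphi(v)\neq\emptyset$ and $|\varphi(u)|\le|\varphi(v)|$'' reduces to ``$\varphi(u)\cap\varphi(v)\neq\emptyset$''. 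Chaining these equivalences shows that $(U,\varphi)$ is a weak directed intersection representation of $D$, giving $\WDIN(D)\le|U|=\UIN(G)$.

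For the reverse inequality, I would take an optimal weak directed intersection representation $(U,\varphi)$ of $D$, so that $|U|=\WDIN(D)$, and show that it is in fact a uniform intersection representation of $G$. The crucial observation is that the cardinalities $|\varphi(v)|$ are forced to be constant: for any edge $\{u,v\}$ of $G$ both arcs $(u,v)$ and $(v,u)$ belong to $A(D)$, so applying the defining condition of a weak directed intersection representation to each arc yields $|\varphi(u)|\le|\varphi(v)|$ and $|\varphi(v)|\le|\varphi(u)|$, hence $|\varphi(u)|=|\varphi(v)|$. Since $G$ is connected, propagating this equality along a path between any two vertices shows that $|\varphi(v)|$ equals the same integer $k$ for all $v\in V(G)$. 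With equal cardinalities in hand, the weak inequality is again automatically satisfied, so for distinct $u,v$ we obtain $(u,v)\in A(D)$ if and only if $\varphi(u)\cap\varphi(v)\neq\emptyset$; combined with $(u,v)\in A(D)$ if and only if $\{u,v\}\in E(G)$, this shows that $\varphi$ realizes precisely the adjacencies of $G$ while assigning every vertex a set of cardinality $k$. Thus $(U,\varphi)$ is a uniform intersection representation of $G$, and $\UIN(G)\le|U|=\WDIN(D)$.

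Both steps are short, and the single point requiring care — the only place where connectivity of $G$ enters — is the propagation argument establishing that all color sets have equal cardinality. I expect this to be the main, though modest, obstacle: one must verify that the bidirectionality of each pair of arcs in $D$ forces equality of cardinalities across the corresponding edge, and that connectivity then extends this equality to the whole vertex set, after which the two representations coincide.
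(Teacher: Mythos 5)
Your proof is correct and follows essentially the same route as the paper: the paper's (very terse) proof asserts exactly the equivalence you establish, namely that under connectivity and bidirectional arcs the weak directed intersection representations of $D$ are precisely the intersection representations of $G$ in which all color sets have equal cardinality. Your two-inequality write-up merely spells out the two directions of that equivalence, including the propagation-along-paths argument that the paper leaves implicit.
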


\begin{proof}
Let $V$ be the common vertex set of $D$ and $G$.
The fact that $G$ is connected and that for each arc $(u,v)$ of $D$ we have also the opposite arc implies that,
given a set $U$ and a mapping $\varphi$ assigning to each vertex $v\in V$ a~set $\varphi(v)\subseteq U$, the pair $(U,\varphi)$ is a weak directed intersection representation of $D$ if and only if $(U,\varphi)$ is an intersection representation of $G$ such that all vertices are assigned sets with the same cardinality. 
Hence, $\UIN(G) = \WDIN(D)$, as claimed.
\end{proof}

In all of the above variants of intersection representations $(U,\varphi)$ of graphs, digraphs, and partially ordered graphs, we will refer to a representation $(U,\varphi)$ as \emph{optimal} if it has minimum cardinality.

\section{On the DIN of triangle-free Hamiltonian DAGs}

In this section, we present our main result about the polynomial computation of $\DIN(D)$ (including a corresponding optimal directed intersection representation
$(U, \varphi)$ for $D$) for a triangle-free Hamiltonian DAG $D$.

\begin{observation}\label{Ham-path-DAG-unique}
    Let $D$ be a Hamiltonian DAG.
    Then $D$ admits a unique Hamiltonian path.
\end{observation}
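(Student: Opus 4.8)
The plan is to prove the statement by showing two things: first, that a Hamiltonian DAG admits \emph{at least} one Hamiltonian path (which is guaranteed by the definition of Hamiltonian), and second, that it admits \emph{at most} one. Since existence is immediate from the definition, the entire content of the observation lies in uniqueness, and that is where I would concentrate. The key structural fact I would exploit is that in a DAG, any directed path induces a strict linear order on the vertices it visits, and a Hamiltonian path visits \emph{all} vertices, so it induces a strict total order on $V(D)$.

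First I would recall the standard fact that a DAG admits a topological ordering, i.e.\ a linear ordering $v_1, \ldots, v_n$ of its vertices such that every arc $(v_i, v_j)$ satisfies $i < j$. A Hamiltonian path $P = (u_1, \ldots, u_n)$ must itself be consistent with \emph{every} topological ordering: since $(u_i, u_{i+1}) \in A$ for each $i$, and arcs always go forward in any topological order, the sequence $u_1, \ldots, u_n$ is a topological ordering. The crucial observation is then that a Hamiltonian path forces the topological ordering to be \emph{unique}: because the path supplies an arc between each consecutive pair $u_i, u_{i+1}$, any topological ordering must place $u_i$ before $u_{i+1}$, and since this holds for all consecutive pairs and there are $n$ vertices, the relative order of all vertices is completely determined.

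The cleanest way I would carry out the uniqueness argument is by contradiction. Suppose $D$ had two distinct Hamiltonian paths $P = (u_1, \ldots, u_n)$ and $Q = (w_1, \ldots, w_n)$. Both are topological orderings of $D$, as argued above. If they are distinct as sequences, there is a smallest index $i$ where they first differ. I would then argue that the vertex $w_i$ appears later in $P$, say $w_i = u_j$ with $j > i$, while $u_i$ appears later in $Q$; tracing the two chains of consecutive arcs from the point of divergence yields a pair of vertices $a, b$ with both $a$ preceding $b$ in one path and $b$ preceding $a$ in the other. Since each path's consecutive-arc structure forces the corresponding order relation in \emph{any} topological ordering (in particular in the other path, which is also a topological ordering), this produces two arcs or two reachability relations that together form a directed cycle, contradicting acyclicity. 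Concretely, the forward reachability along $P$ from $a$ to $b$ and along $Q$ from $b$ to $a$ closes a cycle.

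I expect the main obstacle to be making the comparison of the two sequences fully rigorous rather than hand-wavy: one must argue carefully that a single position of disagreement between two topological orderings of the \emph{same} DAG actually yields a genuine cyclic dependency. The subtlety is that two distinct topological orderings of a general DAG can coexist without contradiction (that is the normal situation); what rules this out \emph{here} is precisely the presence of the Hamiltonian arcs, which turn each ordering into a total order whose \emph{covering} relations are themselves arcs of $D$. The cleanest framing, which I would adopt to sidestep the bookkeeping, is to note that the reachability (transitive-closure) relation of a DAG is a partial order, and a Hamiltonian path is exactly a \emph{linear extension} in which every consecutive pair is an arc; any Hamiltonian path must list the vertices in an order compatible with reachability, and two comparable-by-reachability vertices must appear in the same relative order in \emph{both} paths. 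Since the Hamiltonian arcs make every consecutive pair comparable, transitivity forces the entire linear order, hence the path, to be unique.
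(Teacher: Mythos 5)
Your proof is correct. Note that the paper offers no proof of this observation at all---it is stated as immediate---so there is nothing to diverge from; your uniqueness argument is the standard justification: any Hamiltonian path of a DAG is in particular a topological ordering, so two distinct Hamiltonian paths would have to order some pair of vertices $a,b$ oppositely, and concatenating the directed $a$-to-$b$ subpath of one with the directed $b$-to-$a$ subpath of the other yields a closed directed walk, hence a directed cycle, contradicting acyclicity. The detours in your write-up (uniqueness of the topological ordering, the linear-extension framing) are harmless but unnecessary; the two-sentence contradiction argument you give at the end already suffices.
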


Following \cref{Ham-path-DAG-unique}, we may thus refer to \emph{the} Hamiltonian path of a DAG.

We prove the following theorem.

\begin{restatable}{theorem}{mainresult}\label{triangle-free-Hamiltonian-DIN-main}
\begin{sloppypar}
Let $D = (V,A)$ be a triangle-free Hamiltonian DAG and let $G$ be the underlying graph of $D$.
Let $P = (v_1,\ldots, v_n)$ be the Hamiltonian path in $D$ and let $b:V\to \mathbb{Z}_+$ be a capacity function on the vertices of $D$ defined as follows: $b(v_1) = 0$ and \hbox{$b(v_i) = \max\{b(v_{i-1})+\deg(v_{i-1})-\deg(v_{i})+1,0\}$, for all $i\in \{2,\ldots,n\}$.}
Then, \hbox{$\DIN(D) = |A| + b(V) - \nu(G,b)$} and an optimal directed intersection representation of $D$ can be computed in time $\mathcal{O}(|V|^3)$.
\end{sloppypar}
\end{restatable}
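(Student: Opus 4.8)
The plan is to reduce the computation of $\DIN(D)$ to a maximum weight $b$-matching in $G$ by translating color sets into purely combinatorial data. First I would record two structural facts. Since $D$ is a Hamiltonian DAG, the Hamiltonian path $P=(v_1,\dots,v_n)$ acts as a topological order: every edge $\{v_i,v_j\}$ with $i<j$ must be oriented $v_i\to v_j$, since a backward arc would close a cycle with the sub-path from $v_i$ to $v_j$. Consequently, in any directed intersection representation $(U,\varphi)$ the sizes $s_i:=|\varphi(v_i)|$ are strictly increasing along $P$, and the orientation induced by the sizes automatically agrees with $A$; so validity reduces to matching the underlying graph $G$ with the cardinality order fixed. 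Because all sizes are distinct, any two vertices sharing a color are adjacent, and because $G$ is triangle-free no color can lie in three pairwise adjacent vertices. Hence every color belongs to at most two vertices, and every shared (``pair'') color sits on a single edge.

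Next I would extract from $(U,\varphi)$ the data $s_i=|\varphi(v_i)|$ together with $x(e)=$ the number of colors shared by the two endpoints of the edge $e$. Realizing adjacency forces $x(e)\ge 1$; the colors at $v_i$ split into pair colors and private colors, giving $\sum_{e\ni v_i}x(e)\le s_i$; and a direct count of private plus pair colors yields $|U|=\sum_i s_i-\sum_{e}x(e)$. Conversely, any integer data with $s_1<\dots<s_n$, $x(e)\ge 1$, and $\sum_{e\ni v_i}x(e)\le s_i$ is realizable by a concrete coloring (one fresh color for each pair-color slot of each edge, plus private colors padding each $\varphi(v_i)$ up to $s_i$). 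Therefore
\[
\DIN(D)=\min\Big\{\textstyle\sum_i s_i-\sum_e x(e)\Big\}
\]
over all such data.

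Then I would substitute $t_i:=s_i-\deg(v_i)$ and $y(e):=x(e)-1$. Using $\sum_i\deg(v_i)=2|A|$ and $|E(G)|=|A|$, the objective becomes $|A|+\sum_i t_i-\sum_e y(e)$; the condition $x(e)\ge 1$ becomes $y(e)\ge 0$; the budget constraint becomes $\sum_{e\ni v_i}y(e)\le t_i$, i.e. $y$ is exactly a $b$-matching of $G$ with capacities $t$; and $s_1<\dots<s_n$ together with $s_i\ge\deg(v_i)$ becomes $t_i\ge 0$ and $t_i\ge t_{i-1}+\deg(v_{i-1})-\deg(v_i)+1$. For fixed $t$ the optimal $y$ attains $\sum_e y(e)=\nu(G,t)$, so $\DIN(D)=|A|+\min_t\big(\sum_i t_i-\nu(G,t)\big)$ over feasible $t$. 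The function $b$ from the statement is precisely the pointwise-smallest feasible $t$ (this follows by induction from its defining recursion), so every feasible $t$ satisfies $t\ge b$ componentwise.

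The crux, and the step I expect to be the main obstacle, is to show the minimum is attained at $t=b$. For this I would prove a monotonicity estimate: raising a single capacity by one increases the maximum $b$-matching value by at most one, since from an optimal matching for $b'$ that exceeds $b$ by one unit at a single vertex, deleting one unit from an incident edge restores $b$-feasibility while losing at most one unit of weight. Iterating over coordinates gives $\nu(G,t)\le\nu(G,b)+\sum_v\big(t(v)-b(v)\big)$ for all $t\ge b$, whence $\sum_i t_i-\nu(G,t)\ge b(V)-\nu(G,b)$ with equality at $t=b$, yielding $\DIN(D)=|A|+b(V)-\nu(G,b)$. For the running time I would observe that $b$ is computable in $\O(n)$ time and, since its recursion telescopes between resets, $\max_v b(v)=\O(n)$; hence by \cref{max-weight-b-matching} an optimal $b$-matching (and thus $\nu(G,b)$) is found in $\O(n^3)$ time, after which the optimal data $(s,x)$ and the corresponding coloring are read off in $\O(n^2)$ time, giving the claimed $\O(|V|^3)$ bound.
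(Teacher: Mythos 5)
Your proof is correct, and it takes a genuinely different route from the paper's. The paper factors the theorem through a chain of general reductions: $\DIN(D)=\WDIN(D)$ for Hamiltonian DAGs (\cref{lemma:DAG-weakDAG-Hamiltonian}), $\WDIN(D)=\IN(G,\preceq)$ for the reachability order (\cref{leqDIN-preceqIN}), $\IN(G,\preceq)=\IN(G,\ell)$ for the $\alpha$-ranking $\ell$ on triangle-free (in fact diamond-free) graphs (\cref{fromPO-to-LBtri-free-graphs}), and finally $\IN(G,\ell)=|E|+b(V)-\nu(G,b)$ via $b$-matchings (\cref{triangle-free-l-constrainedIN}). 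You instead exploit Hamiltonicity and triangle-freeness simultaneously to parametrize every directed intersection representation by the integer data $(s,x)$ of set sizes and per-edge shared-color counts, turning $\DIN(D)$ into the two-level program $|A|+\min_{t}\bigl(\sum_i t_i-\nu(G,t)\bigr)$ over capacity functions $t$ dominating $b$ pointwise, and you close the argument with a Lipschitz-type exchange lemma: increasing a single capacity by one increases $\nu$ by at most one, so the minimum is attained at the pointwise-least feasible $t=b$. That lemma is your substitute for the paper's trimming step (\cref{claim-varphi-b}), which instead normalizes an arbitrary representation so that each vertex carries exactly $b(v)$ non-edge colors before a $b$-matching is extracted from it. What each approach buys: yours is shorter and self-contained, essentially an exchange argument tailored to this theorem; the paper's longer route establishes the intermediate parameters ($\WDIN$, partially ordered graphs, $\ell$-constrained representations) and their relationships (\cref{Problems-relationships}) in wider generality (arbitrary DAGs, diamond-free graphs), and these are reused elsewhere in the paper. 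The complexity accounting also agrees: your observation that $\max_{v}b(v)=\O(|V|)$ (the quantity $b(v_i)+\deg(v_i)$ grows by one between resets) makes the algorithm of \cref{max-weight-b-matching} run in $\O(|V|^3)$, matching the paper's bound. One cosmetic point: your identity $|U|=\sum_i s_i-\sum_e x(e)$ holds only after discarding colors assigned to no vertex, which is harmless since you are minimizing.
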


We give a concrete example of the application of this theorem in \cref{fig:example-computation}.
The graph used belongs to the class of {\em augmented source arc-path graphs} introduced in Liu et al.~\cite{MR4231959} as a 
candidate family of graphs for attaining the maximum possible DIN value among all graphs with the same number of vertices.

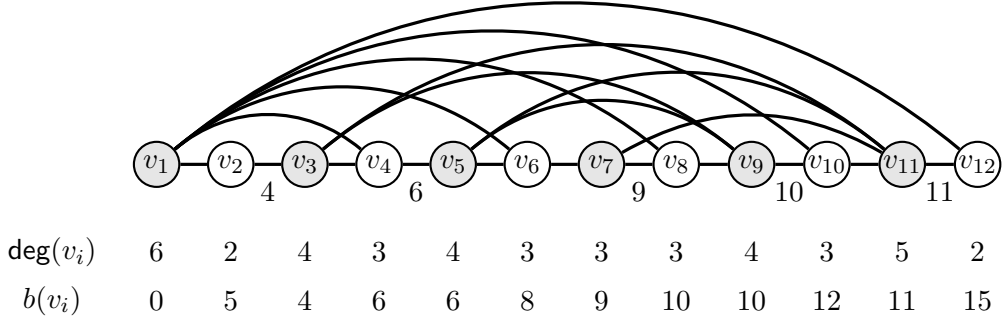
\begin{figure}[htp]
\centering%
\begin{tikzpicture}[]
    \tikzset{vertex/.style={draw=black,circle,fill=none,inner sep=0pt,minimum size=17pt,thick}}
    \begin{scope}
    \matrix [matrix of math nodes, nodes={anchor=center}, column sep=0.35cm,
    row 1/.style={nodes={vertex}}] (array) {
        & |[fill=black!10]| v_1 & v_2 & |[fill=black!10]| v_3 & v_4 & |[fill=black!10]| v_5 & v_6 & |[fill=black!10]| v_7 & v_8 & |[fill=black!10]| v_9 & v_{10} & |[fill=black!10]| v_{11} & v_{12} \\[15pt]
        \deg(v_i) & 6 & 2 & 4 & 3 & 4 & 3 & 3 & 3 & 4 & 3 & 5 & 2  \\
        b(v_i) & 0 & 5 & 4 & 6 & 6 & 8 & 9 & 10 & 10 & 12 & 11 & 15 \\
    };

    \draw [very thick] (array-1-2) to node[midway,below,draw=none,rectangle] {\strut} (array-1-3);
    \draw [very thick] (array-1-3) to node[midway,below,draw=none,rectangle] {\strut 4} (array-1-4);
    \draw [very thick] (array-1-4) to node[midway,below,draw=none,rectangle] {\strut} (array-1-5);
    \draw [very thick] (array-1-5) to node[midway,below,draw=none,rectangle] {\strut 6} (array-1-6);
    \draw [very thick] (array-1-6) to node[midway,below,draw=none,rectangle] {\strut} (array-1-7);
    \draw [very thick] (array-1-7) to node[midway,below,draw=none,rectangle] {\strut} (array-1-8);
    \draw [very thick] (array-1-8) to node[midway,below,draw=none,rectangle] {\strut 9} (array-1-9);
    \draw [very thick] (array-1-9) to node[midway,below,draw=none,rectangle] {\strut} (array-1-10);
    \draw [very thick] (array-1-10) to node[midway,below,draw=none,rectangle] {\strut 10} (array-1-11);
    \draw [very thick] (array-1-11) to node[midway,below,draw=none,rectangle] {\strut} (array-1-12);
    \draw [very thick] (array-1-12) to node[midway,below,draw=none,rectangle] {\strut 11} (array-1-13);    
    
    \foreach \i in {4,6,8,10,12}{
        \draw [very thick] (array-1-2) to[bend left=40] (array-1-\the\numexpr\i+1);
    }
    \foreach \i in {9,11}{
        \draw [very thick] (array-1-4) to[bend left=40] (array-1-\the\numexpr\i+1);
    }
    \foreach \i in {9,11}{
        \draw [very thick] (array-1-6) to[bend left=40] (array-1-\the\numexpr\i+1);
    }
    \draw [very thick] (array-1-8) to[bend left=30] (array-1-12);
    \end{scope}    
\end{tikzpicture}
\caption{An example of application of \cref{triangle-free-Hamiltonian-DIN-main} on a digraph $D = (V,A)$. 
The figure shows the underlying graph $G=(V,E)$; the digraph $D$ is obtained by orienting all the edges of $G$ from the left to the right. 
We explicitly give the parameters used in \cref{triangle-free-Hamiltonian-DIN-main}: degrees, capacity function $b()$, and edge values of an optimal $b$-matching (we only display the non-zero values).
The fact that the $b$-matching is indeed optimal can either be verified by observing that its total weight, $40$, matches the upper bound given by the $b$-weight of some vertex cover of $G$ (consider, e.g., $\{v_1,v_3,v_5,v_7,v_9,v_{11}\}$), or by using the fact that $\nu(G,b) = |A| + b(V) - \DIN(D) = 40$, since $\DIN(D) = 77$ (as shown in~\cite{MR4231959}), $|A| = 21$, and $b(V) = 96$.
\label{fig:example-computation}}
\end{figure}

\subsection{Proof of \cref{triangle-free-Hamiltonian-DIN-main}}
The proof of \cref{triangle-free-Hamiltonian-DIN-main} involves several intermediate steps that allow us to 
state relationships among the variants of intersection representations introduced above. 
Here, we present these steps and the relationships they involve among variants of IN and DIN, and defer the proofs to the following subsections.

Our first lemma shows that in the case of Hamiltonian DAGs, the notions of weak directed intersection representation and directed intersection representation coincide.

\begin{restatable}{lemma}{dindinleq}\label{lemma:DAG-weakDAG-Hamiltonian}
Let $D = (V, A)$ be a DAG. 
Then, every weak directed intersection representation of $D$ is a directed intersection representation.
Moreover, if $D$ is Hamiltonian, then every directed intersection representation of $D$ is a weak directed intersection representation, and $\DIN(D) = \WDIN(D)$.
\end{restatable}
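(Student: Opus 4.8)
The plan is to exploit the single structural difference between the two notions, namely strict versus weak cardinality inequality, and to show that on the relevant digraph classes the borderline case of equal cardinalities cannot arise. Throughout I would fix a pair of distinct vertices $u,v$ and compare the two defining conditions, reducing everything to a short case analysis.

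For the first assertion, let $(U,\varphi)$ be a weak directed intersection representation of a DAG $D$, and verify the (strict) directed intersection condition for $(u,v)$. One direction is immediate: if $\varphi(u)\cap\varphi(v)\neq\emptyset$ and $|\varphi(u)|<|\varphi(v)|$, then also $|\varphi(u)|\le|\varphi(v)|$, so the weak property gives $(u,v)\in A$. For the converse, suppose $(u,v)\in A$; the weak property yields $\varphi(u)\cap\varphi(v)\neq\emptyset$ and $|\varphi(u)|\le|\varphi(v)|$, and the only thing to do is upgrade this to a strict inequality. The key step is to rule out $|\varphi(u)|=|\varphi(v)|$: if equality held, then the pair $(v,u)$ would also satisfy $\varphi(v)\cap\varphi(u)\neq\emptyset$ and $|\varphi(v)|\le|\varphi(u)|$, so the weak property would force $(v,u)\in A$ as well, producing a directed cycle of length two and contradicting that $D$ is a DAG. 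Hence $|\varphi(u)|<|\varphi(v)|$, so $(U,\varphi)$ is a directed intersection representation.

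For the second assertion, assume in addition that $D$ is Hamiltonian and let $(U,\varphi)$ be a directed intersection representation of $D$. The crucial observation is that Hamiltonicity forces all color-set cardinalities to be pairwise distinct: writing $P=(v_1,\ldots,v_n)$ for the Hamiltonian path (unique by \cref{Ham-path-DAG-unique}), each arc $(v_i,v_{i+1})\in A$ gives $|\varphi(v_i)|<|\varphi(v_{i+1})|$, so the cardinalities are strictly increasing along $P$ and therefore take $n$ distinct values. With this in hand I again check the weak condition for $(u,v)$: if $(u,v)\in A$ the directed condition gives $\varphi(u)\cap\varphi(v)\neq\emptyset$ and $|\varphi(u)|<|\varphi(v)|$, hence $|\varphi(u)|\le|\varphi(v)|$; conversely, if $\varphi(u)\cap\varphi(v)\neq\emptyset$ and $|\varphi(u)|\le|\varphi(v)|$, then since distinct vertices have distinct cardinalities we in fact have $|\varphi(u)|<|\varphi(v)|$, and the directed condition yields $(u,v)\in A$. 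Thus $(U,\varphi)$ is a weak directed intersection representation.

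Finally, combining the two assertions for a Hamiltonian DAG $D$ shows that its directed and weak directed intersection representations are exactly the same pairs $(U,\varphi)$; since both invariants are defined as the minimum cardinality over these (nonempty) families, the equality $\DIN(D)=\WDIN(D)$ follows at once. I expect no serious obstacle: the whole argument is a careful case analysis, and the only genuinely new idea is recognizing that a Hamiltonian path in a DAG forces strictly increasing—hence all distinct—cardinalities, which is precisely what eliminates the troublesome equal-cardinality case in the second assertion.
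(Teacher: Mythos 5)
Your proposal is correct and takes essentially the same approach as the paper: ruling out the equal-cardinality case via the absence of $2$-cycles in a DAG for the first assertion, and via the all-distinct cardinalities forced by the Hamiltonian path for the second, with the equality $\DIN(D)=\WDIN(D)$ following because the two families of representations coincide. The only difference is cosmetic --- you verify the defining ``if and only if'' directly while the paper splits into the arc/non-arc cases --- which is a logically equivalent organization of the same argument.
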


Next, we show that weak directed intersection representations of a DAG $D$ correspond to intersection representations of a partially ordered graph $(G, \preceq)$, where $G$ is the underlying graph of $D$ and the partial order $\preceq$ is defined by the reachability relation in~$D$.

\begin{restatable}{lemma}{dinleqin}\label{leqDIN-preceqIN}
Let $D = (V, A)$ be a DAG. 
Let $G = (V, E)$ be the underlying graph of $D$ and let $\preceq$ be the partial order on $V$ defined by setting $u\preceq v$ if and only if there exists a $u,v$-path in $D$.
Then, every weak directed intersection representation of the DAG $D$ is an intersection representation of the partially ordered graph $(G, \preceq)$, and vice versa.
In particular, $\WDIN(D) = \IN(G, \preceq)$.
\end{restatable}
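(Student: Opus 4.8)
The plan is to prove the two directions of the equivalence separately, showing that the defining conditions of a weak directed intersection representation of $D$ coincide exactly with those of an intersection representation of the partially ordered graph $(G,\preceq)$. The key observation I would exploit at the outset is that, in a DAG, the reachability relation $\preceq$ is a genuine partial order (reflexivity is immediate from the trivial one-vertex path, antisymmetry follows from acyclicity, and transitivity from path concatenation), so $(G,\preceq)$ is a legitimate partially ordered graph. Since both representations use the same finite ground set $U$ and the same family of vertex color sets $\varphi(v)\subseteq U$, the entire argument reduces to checking that the adjacency and cardinality conditions match up.

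First I would take a weak directed intersection representation $(U,\varphi)$ of $D$ and verify it is an intersection representation of $(G,\preceq)$. For the adjacency condition, I note that $\{u,v\}\in E$ iff $u$ and $v$ are adjacent in $D$, i.e.\ iff $(u,v)\in A$ or $(v,u)\in A$; in either case the weak representation forces $\varphi(u)\cap\varphi(v)\neq\emptyset$, and conversely a nonempty intersection forces at least one arc, hence an edge. For the order condition, suppose $u\prec v$, so there is a $u,v$-path $(w_1,\ldots,w_k)$ in $D$ with $w_1=u$, $w_k=v$, and $k\geq 2$. Along each arc $(w_i,w_{i+1})$ the weak condition gives $\varphi(w_i)\cap\varphi(w_{i+1})\neq\emptyset$ and $|\varphi(w_i)|\leq|\varphi(w_{i+1})|$, so $|\varphi(u)|\leq|\varphi(v)|$. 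To upgrade this weak inequality to the strict $|\varphi(u)|<|\varphi(v)|$ required by the poset representation, I would invoke \Cref{lemma:DAG-weakDAG-Hamiltonian}, whose first assertion says that every weak directed intersection representation of a DAG is in fact a directed intersection representation; applied to the arc $(w_{k-1},w_k)$ this yields $|\varphi(w_{k-1})|<|\varphi(w_k)|$, and combined with the weak chain along the prefix gives $|\varphi(u)|<|\varphi(v)|$.

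For the converse I would take an intersection representation $(U,\varphi)$ of $(G,\preceq)$ and check it is a weak directed intersection representation of $D$. The adjacency half again transfers verbatim through the underlying graph. The content is to show, for distinct $u,v$, that $(u,v)\in A$ iff $\varphi(u)\cap\varphi(v)\neq\emptyset$ and $|\varphi(u)|\leq|\varphi(v)|$. If $(u,v)\in A$ then $u\prec v$ (the single arc is a path), so the poset condition gives $|\varphi(u)|<|\varphi(v)|$, hence $|\varphi(u)|\leq|\varphi(v)|$, and the edge $\{u,v\}\in E$ gives the nonempty intersection. Conversely, if $\varphi(u)\cap\varphi(v)\neq\emptyset$ then $\{u,v\}\in E$, so $(u,v)\in A$ or $(v,u)\in A$; I must rule out the case $(v,u)\in A$ under the hypothesis $|\varphi(u)|\leq|\varphi(v)|$. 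But $(v,u)\in A$ would give $v\prec u$ and hence $|\varphi(v)|<|\varphi(u)|$, contradicting $|\varphi(u)|\leq|\varphi(v)|$; therefore $(u,v)\in A$, as required.

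Once both inclusions are established, the representations of the two objects are literally the same pairs $(U,\varphi)$, so the minimum cardinalities coincide, giving $\WDIN(D)=\IN(G,\preceq)$. The main obstacle I anticipate is the strictness upgrade in the first direction: the weak representation only directly yields a nondecreasing chain of cardinalities along a path, and pinning down the strict increase at the final step is exactly where the DAG structure (via the first part of \Cref{lemma:DAG-weakDAG-Hamiltonian}) must be used rather than the weak inequality alone. The rest is a careful but routine matching of the two sets of defining conditions, and it is worth recording that $\WDIN(D)$ is well defined here because every DAG admits a weak directed intersection representation by \Cref{prop:weak-characterization}.
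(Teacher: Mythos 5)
Your proof is correct and takes essentially the same route as the paper's: both verify that the defining conditions of the two representations coincide, upgrade the weak inequality along a $u,v$-path to a strict one using acyclicity (the paper via an inline claim about arcs, you by citing the first part of \cref{lemma:DAG-weakDAG-Hamiltonian}, which encapsulates the identical argument), and in the converse direction rule out the reverse arc via the poset condition. If anything, your converse is slightly more careful than the paper's, since you argue from the weak hypothesis $|\varphi(u)|\leq|\varphi(v)|$ required by the definition, whereas the paper states that hypothesis with a strict inequality and leaves the equality case implicit.
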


Then, we show that for a triangle-free graph $G$, and any given partial order $\preceq$ on the vertices of $G$, the computation of an intersection representation for the partially ordered graph $(G, \preceq)$ can be reduced to the computation of an $\ell$-constrained intersection representation of $G$ for some polynomially computable demand function $\ell$.

\begin{restatable}{theorem}{potolbtri}\label{fromPO-to-LBtri-free-graphs} 
Let $(G,\preceq)$ be a partially ordered graph such that $G = (V,E)$ is triangle-free.
Then, the $\alpha$-ranking $\ell:V\to \mathbb{Z}_+$ of $(G,\preceq)$ can be computed in linear time, and every optimal $\ell$-constrained intersection representation $(U,\varphi)$ of $G$ can be transformed to an optimal $\ell$-constrained intersection representation $(U,\psi)$ of $G$ that is also an optimal intersection representation of $(G,\preceq)$.
In particular, we have $\IN(G, \preceq) = \IN(G,\ell)$.
Moreover, this transformation can be done in time $\mathcal{O}(|U| \cdot |V|+|E| \cdot \max_{v\in V}|\varphi(v)|)$.
\end{restatable}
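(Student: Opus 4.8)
The plan is to realize the reduction announced in \cref{Problems-relationships}: exhibit the demand function $\ell$, and then show that, once the demands are fixed, the strict-cardinality constraint coming from $\preceq$ can be decoupled from the intersection constraint. Throughout I exploit that $G$ is triangle-free, so the neighborhood of every vertex is an independent set and every color class of an intersection representation is a clique of size at most two (an edge, or a single vertex; empty classes may be discarded). In the triangle-free case the $\alpha$-degree $\adeg(v)$ coincides with $\deg(v)$, so the $\alpha$-ranking is the pointwise-minimal function $\ell\colon V\to\mathbb{Z}_+$ with $\ell(v)\ge\deg(v)$ for all $v$ and $\ell(u)<\ell(v)$ whenever $u\prec v$; concretely $\ell(v)=\max\{\deg(v),\,1+\max_{u\prec v}\ell(u)\}$. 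Since $\ell$ increases along $\preceq$, the maximum over all predecessors of $v$ equals the maximum over the in-neighbors of $v$ in the DAG $D$ representing $\preceq$, so a single topological-order pass over $D$ computes $\ell$ in time $\mathcal{O}(|V|+|A(D)|)$.

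For the inequality $\IN(G,\ell)\le\IN(G,\preceq)$, I would show that every representation of $(G,\preceq)$ is already $\ell$-constrained. Fix such a $(U,\varphi)$. Triangle-freeness forces $|\varphi(v)|\ge\deg(v)$, because the $\deg(v)$ neighbors of $v$ are pairwise nonadjacent and hence the colors they share with $v$ are pairwise distinct. An induction along a linear extension of $\preceq$ then gives $|\varphi(v)|\ge\ell(v)$: if $u\prec v$ then $|\varphi(v)|>|\varphi(u)|\ge\ell(u)$, so $|\varphi(v)|\ge 1+\max_{u\prec v}\ell(u)$, which combined with $|\varphi(v)|\ge\deg(v)$ yields $|\varphi(v)|\ge\ell(v)$. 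As poset representations form a subclass of the $\ell$-constrained ones, $\IN(G,\ell)\le\IN(G,\preceq)$.

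The reverse inequality rests on a monotonicity-of-cost observation. For a profile $t\colon V\to\mathbb{Z}_+$ with $t(v)\ge\deg(v)$, let $\kappa(t)$ be the least number of colors in an intersection representation $(U,\varphi)$ of $G$ with $|\varphi(v)|=t(v)$ for all $v$; writing each color as private or as shared on an edge, $\kappa(t)=\sum_v t(v)-\mu(t)$, where $\mu(t)$ is the maximum total edge-sharing subject to each edge getting at least one shared color and $\sum_{e\ni v}x(e)\le t(v)$. Decreasing a single coordinate $t(v)$ by one (while staying $\ge\deg(v)$) lowers $\mu(t)$ by at most one, since an optimal $x$ can be repaired by reducing one incident value, so $\kappa$ never increases under coordinatewise decrease. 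As $\ell(v)\ge\deg(v)$ and $\IN(G,\ell)=\min_{t\ge\ell}\kappa(t)$, decreasing any optimal profile down to $\ell$ gives $\kappa(\ell)=\IN(G,\ell)$. An optimal representation realizing the profile $\ell$ thus uses exactly $\IN(G,\ell)$ colors and, $\ell$ being strictly increasing along $\preceq$, is a representation of $(G,\preceq)$; hence $\IN(G,\preceq)=\IN(G,\ell)$.

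What remains—and where I expect the real work—is to turn this existence statement into the explicit transformation of a given optimal $(U,\varphi)$ within the stated running time, rather than reconstructing a representation from scratch (which would require recomputing the sharing pattern, i.e. a $b$-matching). Starting from $\varphi$, I would repair monotonicity by local moves preserving $U$ and validity: whenever $u\prec v$ but $|\psi(u)|\ge|\psi(v)|$, the chain $|\psi(u)|\ge|\psi(v)|\ge\ell(v)>\ell(u)\ge\deg(u)$ gives $|\psi(u)|>\deg(u)$, so by triangle-freeness $u$ carries either a private color, which I relocate to $v$, or an edge bearing two shared colors, one of which I demote to private; either move strictly decreases $|\psi(u)|$ while keeping every edge covered and every demand met. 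The obstacle is essentially the bookkeeping: arguing termination at a strictly monotone profile (via a potential such as the total cardinality, which the demotion steps strictly decrease, combined with a single optimal redistribution of the private colors for the relocation steps), verifying that all of $U$ stays in use so that $\psi$ remains optimal, and bounding the procedure by $\mathcal{O}(|U|\cdot|V|+|E|\cdot\max_v|\varphi(v)|)$, the first term for scanning the assignment of each color to vertices and the second for locating redundant shared colors edge by edge.
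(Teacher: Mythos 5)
Your first three steps are sound, and the third is genuinely different from the paper. The paper never argues via cardinality profiles: it obtains $\IN(G,\preceq)=\IN(G,\ell)$ as a byproduct of the explicit transformation, whereas you derive it from the exchange argument that $\kappa(t)=\sum_v t(v)-\mu(t)$ is non-increasing under coordinatewise decrease of the profile $t$ (the repair of an optimal $x$ works by exactly the pigeonhole you invoke: if $\sum_{e\ni v}x(e)=t(v)>\deg(v)$, some incident edge has $x(e)\ge 2$), hence is minimized at $t=\ell$, whose strict increase along $\prec$ makes a representation realizing $\kappa(\ell)$ also a representation of $(G,\preceq)$. This is correct and close in spirit to the paper's \cref{triangle-free-l-constrainedIN}; together with the inequality $\IN(G,\ell)\le\IN(G,\preceq)$ (which matches the paper's \cref{properties-of-alpha-ranking}) it proves the ``in particular'' part of the statement.

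The gap is the transformation claim, which is an explicit part of the theorem, not an afterthought: every \emph{given} optimal $\ell$-constrained representation $(U,\varphi)$ must be converted into $(U,\psi)$ in time $\O(|U|\cdot|V|+|E|\cdot\max_{v\in V}|\varphi(v)|)$. Your local-repair sketch does not deliver this, as you acknowledge. The relocation move increases $|\psi(v)|$ and can create fresh violations at pairs $(v,w)$ with $v\prec w$, so termination needs a two-level potential (demotions decrease total size; relocations increase a rank-weighted sum between demotions); even granting termination, the step count such a potential yields is far above the stated time bound, and none of this analysis is carried out. Incidentally, your worry that ``all of $U$ stays in use'' is moot: the cardinality of a representation is $|U|$ by definition, so once $(U,\psi)$ is a valid representation of $(G,\preceq)$, its optimality follows from $|U|=\IN(G,\ell)\le\IN(G,\preceq)$.

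The missing idea is that no iteration is needed: a single pass suffices. For each edge $e=\{u,v\}$ pick a witness color $c_e\in\varphi(u)\cap\varphi(v)$; triangle-freeness gives both that distinct edges have distinct witnesses and that $c_e$ occurs in no color set besides those of $u$ and $v$, whence $|\varphi(v)\cap U_E|=\deg(v)$ exactly, where $U_E=\{c_e\colon e\in E\}$. Now set $\psi(v)=(\varphi(v)\cap U_E)\cup X_v$, where $X_v\subseteq\varphi(v)\setminus U_E$ is any set of size $\ell(v)-\deg(v)$ (possible since $|\varphi(v)|\ge\ell(v)$). Then $|\psi(v)|=\ell(v)$ for every $v$, so the strict inequalities along $\prec$ are inherited from $\ell$; edges remain covered by their witnesses, and non-edges remain disjoint because $\psi(v)\subseteq\varphi(v)$. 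This is exactly the paper's argument (carried out in \cref{fromPO-to-LBdiam-free-graphs} for diamond-free graphs, with maximal cliques in place of edges), and the claimed running time falls out of it directly.
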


In fact, we derive \cref{fromPO-to-LBtri-free-graphs} from a more general result for the class of diamond-free graphs (\cref{fromPO-to-LBdiam-free-graphs}).
The analogue of \cref{fromPO-to-LBtri-free-graphs} for diamond-free graphs is given as~\cref{thm:diamond-free}.

Next, we show how to compute an optimal $\ell$-constrained intersection representation of a triangle-free graph $G$.

\begin{restatable}{theorem}{trianglefreelconstrained}\label{triangle-free-l-constrainedIN} 
Let $G = (V,E)$ be a triangle-free graph and $\ell:V\to \mathbb{Z}_+$ be a demand function on the vertices of $G$.
Let $b:V\to \mathbb{Z}_+$ be a capacity function on the vertices of $G$ defined by setting 
$b(v) = \max\{\ell(v) - \deg(v), 0\}$ for all $v\in V$.
Then $\IN(G, \ell) = |E| + b(V) - \nu(G,b)$
and an optimal $\ell$-constrained intersection representation $(U,\varphi)$ of $G$ such that $|\varphi(v)|\le \max\{\ell(v),\deg(v)\}$, for all $v\in V$, can be computed in time $\O(|V| \cdot \log L+\sum_{v\in V}\ell(v)+\min\{L \cdot |V|^2, |E|^2 \cdot \log |V| \cdot \log L\})$, where $L = 1+\max_{v\in V}\ell(v)$.
\end{restatable}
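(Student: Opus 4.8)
The plan is to use the fact that on a triangle-free graph the structure of any intersection representation is extremely rigid, to rewrite $\IN(G,\ell)$ as a small combinatorial optimization over the edges and vertices, and then to identify the optimum of that problem with $b(V)-\nu(G,b)$.

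First I would record the structural fact that drives everything. In an intersection representation $(U,\varphi)$ of a triangle-free graph $G$, no single color can lie in three distinct sets $\varphi(u),\varphi(v),\varphi(w)$, since $u,v,w$ would be pairwise adjacent and hence form a triangle. Thus every color is used by at most two vertices, and if it is used by exactly two then those two are adjacent. After discarding unused colors, $U$ therefore partitions into \emph{private} colors, used by a single vertex, and \emph{edge} colors, used by the two endpoints of a single edge. Writing $x(e)$ for the number of edge colors on $e$ and $p(v)$ for the number of private colors at $v$, we get $|U| = \sum_{e\in E} x(e) + \sum_{v\in V} p(v)$ and $|\varphi(v)| = \sum_{e\ni v} x(e) + p(v)$. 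Validity of the representation is then equivalent to $x(e)\ge 1$ for every edge (adjacent vertices must meet in a color; non-adjacency is automatic) together with the demand inequalities $|\varphi(v)|\ge \ell(v)$. Substituting $y(e) = x(e)-1\ge 0$ and using $\sum_{e\ni v}1 = \deg(v)$, the demand at $v$ becomes $\sum_{e\ni v} y(e) + p(v) \ge \ell(v)-\deg(v)$, i.e.\ $\ge b(v)$. Consequently $\IN(G,\ell) = |E| + \min\{\sum_e y(e) + \sum_v p(v)\}$, the minimum taken over nonnegative integers $y,p$ with $\sum_{e\ni v} y(e) + p(v)\ge b(v)$ for all $v$.

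The heart of the argument, and the step I expect to be the main obstacle, is evaluating this inner minimum. For fixed $y$ the cheapest feasible $p$ is $p(v)=\max\{b(v)-\sum_{e\ni v}y(e),0\}$, so the inner minimum equals $\min_{y\ge 0} F(y)$ where $F(y)=\sum_e y(e)+\sum_v \max\{b(v)-\sum_{e\ni v}y(e),0\}$, and I would prove $\min_{y\ge0}F(y)=b(V)-\nu(G,b)$. When $y$ is a $b$-matching (that is, $\sum_{e\ni v}y(e)\le b(v)$ for all $v$) the truncation is inactive and, using $\sum_v\sum_{e\ni v}y(e)=2\sum_e y(e)$, one computes $F(y)=b(V)-\sum_e y(e)$; minimizing over $b$-matchings thus gives $b(V)-\nu(G,b)$, attained by taking $y$ equal to a maximum-weight $b$-matching $x$. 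The difficulty is the matching lower bound for arbitrary $y$, which I would settle by an exchange argument: if $y$ is not a $b$-matching, some vertex $u$ is overloaded, so some incident edge $e$ has $y(e)\ge 1$; decreasing $y(e)$ by one lowers $\sum_e y(e)$ by $1$, keeps $u$'s truncated term at $0$ (it was already $0$ and stays $0$), and raises the other endpoint's truncated term by at most $1$, so $F$ does not increase. Iterating terminates at a $b$-matching, whence $F(y)\ge b(V)-\nu(G,b)$ for every $y\ge 0$.

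Combining the two directions yields $\IN(G,\ell)=|E|+b(V)-\nu(G,b)$, and unwinding the reformulation produces the representation explicitly: from a maximum-weight $b$-matching $x$, assign to each edge $e$ one base color together with $x(e)$ extra edge colors, and to each vertex $v$ exactly $\max\{b(v)-\sum_{e\ni v}x(e),0\}$ private colors. A short check gives $|\varphi(v)| = \deg(v)+b(v) = \max\{\deg(v),\ell(v)\}$, which meets every demand and matches the claimed size bound. For the running time, $b$ is obtained from the degrees, and since $\max_v b(v)\le \max_v \ell(v)$ we may take $B\le L$ when invoking \cref{max-weight-b-matching}, bounding the computation of $x$ by $\min\{L\cdot|V|^2, |E|^2\log|V|\log L\}$; adding $\mathcal{O}(|V|\log L)$ to set up the values of $b$ and $\mathcal{O}(\sum_{v}\ell(v))$ to write out the sets $\varphi(v)$ (this term dominates the $\mathcal{O}(|E|)$ cost of the base edge colors) gives the stated bound. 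Once the structural reformulation of the first paragraph is in place, everything but the exchange argument is routine bookkeeping.
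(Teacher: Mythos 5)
Your proposal is correct. It rests on the same two pillars as the paper's proof---the triangle-free structural fact that any color appears in at most two color sets, and then only on the two endpoints of an edge, together with the identical construction of an optimal representation from a maximum-weight $b$-matching---but your lower bound takes a genuinely different route. The paper extracts one dedicated color $c_e$ per edge from an arbitrary representation and then needs a pruning step (Claim~\ref{claim-varphi-b}: one may assume $|\varphi'(v)| = b(v)$ for all $v$) precisely so that the shared-color counts $x(e) = |\varphi'(u)\cap\varphi'(v)|$ automatically form a $b$-matching, after which a double-counting identity finishes. You instead convert the structural fact into an exact integer-programming reformulation, $\IN(G,\ell) = |E| + \min\{\sum_{e}y(e)+\sum_{v}p(v)\}$ subject to $\sum_{e\ni v}y(e)+p(v)\ge b(v)$, eliminate $p$, and evaluate $\min_{y\ge 0}F(y)$ by a local exchange argument that drives any overloaded $y$ down to a $b$-matching without increasing $F$; this exchange argument plays exactly the role of the paper's pruning claim, but operates on the optimization problem rather than on the representation itself (and I checked its details: the overloaded endpoint's truncated term indeed stays at $0$, the other endpoint's rises by at most $1$, and termination follows since $\sum_e y(e)$ strictly decreases). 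What your route buys is a clean separation of concerns: the identity $\min_{y\ge 0}F(y)=b(V)-\nu(G,b)$ is a standalone covering-versus-matching fact, independent of colorings, and the bijection between representations without unused colors and pairs $(y,p)$ could be reused for other constrained variants on triangle-free graphs. What the paper's route buys is that pruning stays entirely within the language of representations, which is the same tool it redeploys in the diamond-free setting (\cref{fromPO-to-LBdiam-free-graphs}), where your edge-color decomposition is no longer available since colors can be shared by whole cliques. Your construction, the resulting size guarantee $|\varphi(v)| = \deg(v)+b(v)=\max\{\ell(v),\deg(v)\}$, and the running-time accounting (including $B\le L$ when invoking \cref{max-weight-b-matching}) all match the paper's.
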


As a consequence of \cref{fromPO-to-LBtri-free-graphs,triangle-free-l-constrainedIN}, we obtain a characterization of the intersection number of a partially ordered triangle-free graph, along with a polynomial-time algorithm for its computation. 

\begin{restatable}{theorem}{trianglegreepreceqin}\label{triangle-free-preceqIN}
Let $(G,\preceq)$ be a partially ordered graph such that $G = (V,E)$ is triangle-free.
Let $M$ be the set of minimal elements in the poset $(V, \preceq)$.
Let $b:V\to \mathbb{Z}_+$ be the capacity function on the vertices of $G$ defined by setting
\begin{equation}\label{definition-of-b}
b(v) = \begin{cases}
 0 & \textrm{if } v \in M\,,\\
  \max\left\{1-\deg(v)+\max\limits_{u\colon  u \prec v} (b(u)+\deg(u))\,,0\right\} &
  \textrm{if } v \not\in M\,.
\end{cases}
\end{equation}
Then $\IN(G, \preceq) = |E| + b(V) - \nu(G,b)$ and an optimal intersection representation of $(G,\preceq)$ can be computed in time $\mathcal{O}(|V|^3)$.
\end{restatable}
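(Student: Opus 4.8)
The plan is to obtain the statement as a direct composition of \cref{fromPO-to-LBtri-free-graphs} and \cref{triangle-free-l-constrainedIN}, so that the only genuinely new work lies in identifying the capacity function. First I would invoke \cref{fromPO-to-LBtri-free-graphs} (applicable since $G$ is triangle-free) to produce, in linear time, the $\alpha$-ranking $\ell\colon V\to\mathbb{Z}_+$ of $(G,\preceq)$, together with the identity $\IN(G,\preceq)=\IN(G,\ell)$ and the guarantee that any optimal $\ell$-constrained intersection representation of $G$ can be transformed, within the stated time bound, into an optimal intersection representation of $(G,\preceq)$. Then I would feed this $\ell$ into \cref{triangle-free-l-constrainedIN}, which gives $\IN(G,\ell)=|E|+b'(V)-\nu(G,b')$ for the capacity $b'(v)=\max\{\ell(v)-\deg(v),0\}$, along with an optimal $\ell$-constrained representation $(U,\varphi)$ satisfying $|\varphi(v)|\le\max\{\ell(v),\deg(v)\}$ for all $v$.

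The core step is to prove that $b'$ coincides with the function $b$ defined in \eqref{definition-of-b}, which I would establish by induction along a linear extension of $\preceq$, unfolding the recursive definition of the $\alpha$-ranking. The crucial algebraic observation is that $b'(u)+\deg(u)=\max\{\ell(u)-\deg(u),0\}+\deg(u)=\max\{\ell(u),\deg(u)\}$. For a minimal element $v\in M$ the base case of the $\alpha$-ranking forces $\ell(v)\le\deg(v)$, hence $b'(v)=0$, matching the first line of \eqref{definition-of-b}. For a non-minimal $v$, the defining recursion of $\ell$ reads $\ell(v)=1+\max_{u\prec v}\max\{\ell(u),\deg(u)\}$, so substituting the displayed identity (valid for each $u\prec v$ by the induction hypothesis $b'(u)=b(u)$) yields $b'(v)=\max\{\ell(v)-\deg(v),0\}=\max\{1-\deg(v)+\max_{u\prec v}(b(u)+\deg(u)),0\}$, which is exactly the second line of \eqref{definition-of-b}. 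This proves $b'=b$, and therefore $\IN(G,\preceq)=\IN(G,\ell)=|E|+b(V)-\nu(G,b)$.

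For the algorithmic claim I would assemble the pieces in order: compute $\ell$ and then $b$ in linear time, compute $\nu(G,b)$ and an optimal $b$-matching via \cref{max-weight-b-matching}, convert the matching into an optimal $\ell$-constrained representation as in \cref{triangle-free-l-constrainedIN}, and finally apply the transformation of \cref{fromPO-to-LBtri-free-graphs} to obtain an optimal intersection representation of $(G,\preceq)$. The running time is dominated by the $b$-matching computation, so the quantitative heart of this part is an a priori linear bound on $B=1+\max_{v}b(v)$. Setting $c(v):=b(v)+\deg(v)$, the recursion becomes $c(v)=\max\{1+\max_{u\prec v}c(u),\deg(v)\}$, so $c$ increases by at most one per step along any chain of $\preceq$ and is otherwise capped by the degree; since chains have length at most $|V|$ and degrees are at most $|V|-1$, this gives $c(v)=\mathcal{O}(|V|)$ and hence $B=\mathcal{O}(|V|)$, whence $b(V)=\mathcal{O}(|V|^2)$ and $\max_v|\varphi(v)|=\mathcal{O}(|V|)$. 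Plugging $B=\mathcal{O}(|V|)$ into the $\mathcal{O}(B\cdot|V|^2)$ bound of \cref{max-weight-b-matching} yields $\mathcal{O}(|V|^3)$, which dominates both the $\mathcal{O}(|V|\log L+\sum_{v}\ell(v))$ preprocessing in \cref{triangle-free-l-constrainedIN} (here $L=\mathcal{O}(|V|)$) and the $\mathcal{O}(|U|\cdot|V|+|E|\cdot\max_v|\varphi(v)|)=\mathcal{O}(|V|^3)$ cost of the final transformation.

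I expect the identity $b'=b$ to be the main obstacle, since it hinges on the precise recursive form of the $\alpha$-ranking and on verifying that the minimal-element base case forces $b'(v)=0$; the second delicate point is the bound $B=\mathcal{O}(|V|)$, without which the generic $b$-matching running time would only deliver a larger polynomial than the claimed $\mathcal{O}(|V|^3)$.
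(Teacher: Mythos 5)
Your proposal is correct and follows essentially the same route as the paper's proof: compose \cref{fromPO-to-LBtri-free-graphs} with \cref{triangle-free-l-constrainedIN}, identify the capacity function of \cref{definition-of-b} with $b'(v)=\max\{\ell(v)-\deg(v),0\}$ via the observation that $b(u)+\deg(u)=\ell(u)$ (using $\adeg=\deg$ in triangle-free graphs), and bound $\max_{v\in V} b(v)=\O(|V|)$ so that the $b$-matching computation dominates at $\O(|V|^3)$. One harmless slip: the defining recursion of the $\alpha$-ranking is $\ell(v)=\max\{\deg(v),\,1+\max_{u\prec v}\ell(u)\}$ rather than $\ell(v)=1+\max_{u\prec v}\max\{\ell(u),\deg(u)\}$ as you wrote, but since $\ell(u)\ge\deg(u)$ for all $u$ and the outer $\max\{\cdot,0\}$ in $b'$ absorbs the $\deg(v)$ term (formally, $\max\{\max\{a,\deg(v)\}-\deg(v),0\}=\max\{a-\deg(v),0\}$), your chain of equalities remains valid as written.
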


\begin{figure}[h!]
    \centering
\resizebox{!}{0.15\textwidth}{
\begin{tikzpicture}[->=Latex,very thick]
    \node[draw,ellipse,minimum height=25pt,minimum width=70pt,align=center] (INELL) at (0,0) {$\strut \IN(G,\ell)$};
    \node[draw,ellipse,minimum height=25pt,minimum width=70pt,align=center] (BM) at ($(INELL)+(4,0)$) {$\strut \nu(G,b)$};
    \node[draw,ellipse,minimum height=25pt,minimum width=70pt,align=center] (INPO) at ($(INELL)+(-4,0)$) {$\strut \IN(G,\preceq)$};
    \node[draw,ellipse,minimum height=25pt,minimum width=70pt,align=center] (WDIN) at ($(INPO)+(-4,0)$) {$\strut \WDIN(D)$};
    \node[draw,ellipse,minimum height=25pt,minimum width=70pt,align=center] (DIN) at ($(WDIN)+(-4,0)$) {$\strut \DIN(D)$};

    \draw [-{Latex[round,length=2.5mm,width=2.5mm]}] (DIN) to node[midway,below,draw=none,rectangle,outer sep=10pt] {\Cref{lemma:DAG-weakDAG-Hamiltonian}} (WDIN);
    \draw [-{Latex[round,length=2.5mm,width=2.5mm]}] (WDIN) to node[midway,below,draw=none,rectangle,outer sep=10pt] {\Cref{leqDIN-preceqIN}} (INPO);
    \draw [-{Latex[round,length=2.5mm,width=2.5mm]}] (INPO) to node[midway,below,draw=none,rectangle,outer sep=10pt] {\Cref{fromPO-to-LBtri-free-graphs}} (INELL);
    \draw [-{Latex[round,length=2.5mm,width=2.5mm]}] (INELL) to node[midway,below,draw=none,rectangle,outer sep=10pt] {\Cref{triangle-free-l-constrainedIN}} (BM);
    \draw [-{Latex[round,length=2.5mm,width=2.5mm]}] (INPO) to[bend left = 30] node[midway,above,draw=none,rectangle,outer sep=10pt] {\Cref{triangle-free-preceqIN}} (BM);
\end{tikzpicture}
}
\caption{Steps of the proof of \cref{triangle-free-Hamiltonian-DIN-main}.}\label{chain-of-reductions}
\end{figure}
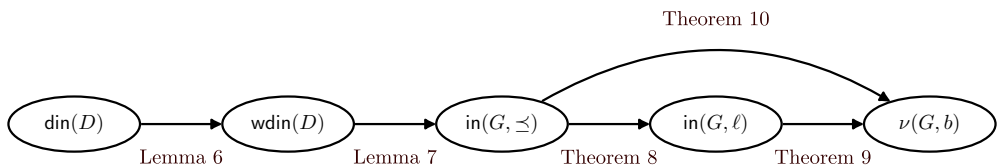
\cref{chain-of-reductions} summarizes the relationships established by the above steps.

By following the above reductions in the opposite order, the above lemmas and theorems lead to the computation of the directed intersection number of triangle-free Hamiltonian DAGs, i.e., it implies \cref{triangle-free-Hamiltonian-DIN-main}, which we recall here for convenience.

\mainresult*

\begin{proof}%
We first show that the following algorithm computes an optimal directed intersection representation of $D.$
\begin{enumerate}
    \item\label{graph} Compute the underlying graph $G = (V,E)$ of $D$.
    \item\label{demand} Let $\preceq$ be the partial order on $V$ such that $u\preceq v$ if and only if there exists a $u,v$-path in $D$.
    Using \cref{triangle-free-preceqIN}, we compute an optimal intersection representation $(U,\varphi)$ of $(G,\preceq)$.
\item\label{return} Return $(U,\varphi)$.
\end{enumerate}

\paragraph{Proof of correctness.}
By \cref{leqDIN-preceqIN}, $(U,\varphi)$ is an optimal weak directed intersection representation of $D$.
Consequently, by \cref{lemma:DAG-weakDAG-Hamiltonian}, $(U,\varphi)$ is an optimal directed intersection representation of $D$.
Hence, the algorithm is correct.

Let $M$ be the set of minimal elements in the poset $(V, \preceq)$, that is, $M = \{v_1\}$.
Let $b'$ be the capacity function on the vertices of $G$ defined in \cref{definition-of-b}, that is, 
\begin{equation*}
b'(v) = \begin{cases}
 0 & \textrm{if } v \in M\,,\\
  \max\left\{1-\deg(v)+\max\limits_{u\colon  u \prec v} \left(b'(u)+\deg(u)\right)\,,0\right\} &
  \textrm{if } v \not\in M\,.
\end{cases}
\end{equation*}
Consider an arbitrary $i\in \{2,\ldots, n\}$.
Since $v_{i-1}\prec v_i$, we have that 
$b'(v_{i})\ge 1-\deg(v_{i})+b'(v_{i-1})+\deg(v_{i-1})$, that is, $b'(v_{i})+\deg(v_i)> b'(v_{i-1})+\deg(v_{i-1})$.
Hence, the sequence $(b'(v_{j})+\deg(v_j))_{1\le j\le n}$ is increasing.
Since $\{u\colon u\prec v_i\} = \{v_1,\ldots, v_{i-1}\}$, this implies that $\max\{b'(u)+\deg(u)\colon u \prec v_i\}= b(v_{i-1})+\deg(v_{i-1})$.
It follows that the capacity function $b'$ is the same as the capacity function $b$ defined in the statement of the theorem.
Consequently, by \cref{triangle-free-preceqIN}, $\DIN(D) = |U| = \IN(G, \preceq) = |E| + b(V) - \nu(G,b)$.

\paragraph{Time complexity analysis.}
Step \ref{graph} can be executed in time $\O(|V|+|A|)$.
By \cref{triangle-free-preceqIN}, step \ref{demand} can be done in time $\O(|V|^3)$.
The time complexity of step \ref{return} is dominated by that of step \ref{demand}.
The claimed time complexity bound of $\O(|V|^3)$ follows.
\end{proof}

\subsection{On the DIN and weak DIN of Hamiltonian DAGs}

We first study the relationship between directed and weak directed intersection representations for arbitrary DAGs and show that they coincide for the case of Hamiltonian DAGs. 
This formalizes the relationships depicted by the two arcs in the bottom right corner of \cref{Problems-relationships}.

\dindinleq*
\begin{proof}
Let $(U, \varphi)$ be a weak directed intersection representation of $D$. 
Then, $(u,v) \in A$ if and only if $\varphi(u) \cap \varphi(v) \neq \emptyset$ and $|\varphi(u)| \leq |\varphi(v)|$.
Moreover, since $D$ is a DAG, whenever $(u,v) \in A$, there is no arc $(v,u)$, hence the inequality between the color sets' cardinalities must be strict, i.e., $|\varphi(u)| < |\varphi(v)|$.
If $(u,v)\not\in A$, then either $\varphi(u) \cap \varphi(v)=  \emptyset$ or $|\varphi(u)| > |\varphi(v)|$, which implies that either $\varphi(u) \cap \varphi(v)=  \emptyset$ or $|\varphi(u)| \ge |\varphi(v)|$.
Hence, $(U, \varphi)$ is also a directed intersection representation of $D$.

Suppose now that $D$ is Hamiltonian and let $(U, \varphi)$ be a directed intersection representation of $D$.
Clearly, we have that for each $(u,v) \in A$ the fact that $\varphi(u) \cap \varphi(v) \neq \emptyset \mbox{ and }|\varphi(u)| < |\varphi(v)|$ implies  that
\begin{equation}\label{arc}
\varphi(u) \cap \varphi(v) \neq \emptyset \mbox{ and }|\varphi(u)| \leq |\varphi(v)|.
\end{equation}
Moreover, the existence of a Hamiltonian path in $D$ implies that for each pair of vertices $u,v$ we have $|\varphi(u)|\neq |\varphi(v)|$. 
Therefore, if $(u,v) \not \in A$, since $(U,\varphi)$ is a directed intersection representation, at least one of the following must hold:
(i) $|\varphi(u)| \geq |\varphi(v)|$, and hence $|\varphi(u)|> |\varphi(v)|$; (ii) $\varphi(u) \cap \varphi(v) = \emptyset$.
The fact that \cref{arc} holds for each arc $(u,v)$ and (i) or (ii) holds whenever $(u,v) \not \in A$ implies that $(U,\varphi)$ is a weak directed intersection representation of $D$. 

The first part of the proof implies that $\DIN(D) \le \WDIN(D)$.
For Hamiltonian DAGs, the converse inequality is implied by the second part of the proof.
\end{proof}

\subsection{From DAGs to partially ordered graphs}

The next lemma formalizes the relationship depicted by the right arc in the middle of \cref{Problems-relationships}.

\dinleqin*
\begin{proof}
Let $(U, \varphi)$ be a weak directed intersection representation of $D$.
\begin{claim}
Fix distinct vertices $u,v$. If there is a $u,v$-path then $|\varphi(u)| < |\varphi(v)|$.
\end{claim}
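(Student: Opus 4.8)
The plan is to derive the strict inequality from the weak directed intersection representation by first exploiting the acyclicity of $D$ at the level of individual arcs, and then propagating the resulting strict inequality along the $u,v$-path.

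First I would prove the arc-level statement: for every arc $(x,y)\in A$, we have $|\varphi(x)|<|\varphi(y)|$. By the definition of a weak directed intersection representation, $(x,y)\in A$ yields $\varphi(x)\cap\varphi(y)\neq\emptyset$ and $|\varphi(x)|\le|\varphi(y)|$. The main obstacle is upgrading this weak inequality to a strict one, and this is exactly where acyclicity enters. Suppose, for contradiction, that $|\varphi(x)|=|\varphi(y)|$. Then we also have $\varphi(y)\cap\varphi(x)\neq\emptyset$ and $|\varphi(y)|\le|\varphi(x)|$, so the ``if'' direction of the definition (applied to the distinct vertices $y,x$) forces $(y,x)\in A$. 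Together with $(x,y)\in A$ this is a cycle of length two, contradicting the assumption that $D$ is a DAG. Hence $|\varphi(x)|<|\varphi(y)|$.

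With the arc-level claim in hand, the rest is routine. Writing the $u,v$-path as $u=w_1,w_2,\ldots,w_k=v$ with all $w_i$ distinct and $(w_i,w_{i+1})\in A$ for each $i$, I would apply the arc-level inequality to each consecutive pair to obtain the strictly increasing chain
\[
|\varphi(u)|=|\varphi(w_1)|<|\varphi(w_2)|<\cdots<|\varphi(w_k)|=|\varphi(v)|,
\]
which yields $|\varphi(u)|<|\varphi(v)|$, as claimed. The only delicate point is the strictness: a naive chaining of the per-arc weak inequalities gives merely $|\varphi(u)|\le|\varphi(v)|$, and one must observe that a weak representation cannot assign intersecting color sets of equal cardinality to the two endpoints of an arc of a DAG, since that would force the reverse arc and hence a $2$-cycle. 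This is the same mechanism already used in the proof of \cref{lemma:DAG-weakDAG-Hamiltonian}.
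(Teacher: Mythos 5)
Your proof is correct and follows essentially the same route as the paper: both arguments first establish the arc-level strict inequality by noting that equality of cardinalities together with a nonempty intersection would force the reverse arc (impossible in a DAG), and then propagate the strict inequality along the path by transitivity. No gaps; the only cosmetic difference is that you phrase the arc-level step as a proof by contradiction, whereas the paper states it directly.
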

\begin{proof}
Let $(u,v)$ be an arc of $D$. Then, we have $|\varphi(u)| \leq |\varphi(v)|$ and $\varphi(u) \cap \varphi(v) \neq \emptyset$. The latter, together with $(v,u) \not \in A$ (since $D$ is a DAG), implies  $|\varphi(u)| < |\varphi(v)|$. 
The claim now follows by transitivity, repeatedly using the above argument on the arcs of a $u,v$-path.
\end{proof}

Let us show that $(U,\varphi)$ satisfies the two properties defining an intersection representation of $(G, \preceq)$.
Let $u, v$ be a pair of distinct vertices in $G$.
\begin{enumerate}
\item Assume that $u \prec v$. 
By the definition of $\prec$ there is a $u,v$-path in $D$, hence by the claim above we have $|\varphi(u)| < |\varphi(v)|$. 
\item Let $\{u,v\} \in E$. 
Then, either $(u, v) \in A$ or $(v,u) \in A$, and in either case, $\varphi(u)\cap \varphi(v) \neq \emptyset$. 

Similarly, if $\varphi(u)\cap \varphi(v) \neq \emptyset$, then either $(u,v)$ or $(v,u)$ is an arc of $D$, which implies
that $\{u,v\} \in E$.
\end{enumerate}

Let us now assume that $(U, \varphi)$ is an intersection representation of $(G,\preceq)$. Then, for each pair of distinct vertices $u,v$ of $D$, we have the following.
\begin{itemize}
    \item If $(u,v) \in A$ then: (i) $\{u,v\} \in E;$ and 
    (ii) $u \prec v$. Hence, from (i) and (ii) respectively,
    $\varphi(u) \cap \varphi(v) \neq \emptyset$, and $|\varphi(u)| < |\varphi(v)|$.
    \item Assume now that: (i) $\varphi(u) \cap \varphi(v) \neq \emptyset$, and (ii) $|\varphi(u)| < |\varphi(v)|$. 
    From (i) we have $\{u,v\} \in E$. Since $G$ is the underlying graph of $D$ and $D$ is a DAG, it follows that exactly one of the arcs $(u,v), (v, u)$ is in $A$. 
    However, we cannot have $(v, u) \in A$, for otherwise $v \prec u$, which, together with $(U, \varphi)$ being an intersection representation of $(G,\preceq)$, would contradict the hypothesis $|\varphi(u)| < |\varphi(v)|$. Therefore, we must have $(u,v) \in A$.
\end{itemize}
The two items imply that $(U, \varphi)$ is a weak directed intersection representation of $D$ and conclude the proof. 
\end{proof}

\subsection{From partially ordered graphs to \texorpdfstring{$\ell$}{l}-constrained intersection representations}

In this section we prove \cref{fromPO-to-LBtri-free-graphs}, which gives the third step  in the sequence of reductions  among the different notions of intersection representations employed in the proof of \cref{triangle-free-Hamiltonian-DIN-main} (see \cref{chain-of-reductions}).
In fact, we are able to show that the relationship between the intersection representation of a partially ordered graph $(G, \preceq)$ and an $\ell$-constrained intersection representation of $G$, stated in \cref{fromPO-to-LBtri-free-graphs}, holds in the larger class of diamond-free graphs, as 
proved in \cref{fromPO-to-LBdiam-free-graphs}.
This, together with \cref{thm:diamond-free}, formalizes the relationship depicted by the topmost arc in \cref{Problems-relationships}.

Diamond-free graphs can be characterized as follows (see, e.g.,~\cite{zbMATH05880194,zbMATH06094028,zbMATH00022656}).

\begin{lemma}\label{lem:diamond-free-characterizations}
Let $G$ be a graph. Then, the following statements are equivalent.
\begin{enumerate}
    \item $G$ is diamond-free.
    \item Every edge of $G$ is contained in a unique maximal clique.
    \item For every vertex $v$ of $G$, the subgraph of $G$ induced by $N(v)$ is a disjoint union of complete graphs.
\end{enumerate}
\end{lemma}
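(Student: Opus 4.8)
The plan is to prove the three statements equivalent by establishing the cyclic chain of implications $(1)\Rightarrow(2)\Rightarrow(3)\Rightarrow(1)$. The single structural fact I would use repeatedly is that a graph is a disjoint union of complete graphs precisely when it contains no induced path on three vertices; combined with the fact that the diamond is two triangles glued along an edge, this says that $G$ contains an induced diamond if and only if some vertex $a$ has three neighbors $b,c,d$ inducing a path $c-b-d$ in $N(a)$.

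First I would handle $(1)\Rightarrow(2)$, whose content is uniqueness, since the existence of a maximal clique through any edge is immediate. Arguing by contraposition, suppose an edge $\{u,v\}$ lies in two distinct maximal cliques $K_1$ and $K_2$, and pick $x\in K_1\setminus K_2$. Since $K_2$ is maximal and $x\notin K_2$, some $y\in K_2$ is nonadjacent to $x$. As $x$ is adjacent to both $u$ and $v$ (being in $K_1\supseteq\{u,v\}$) while $y$ is not adjacent to $x$, the vertices $u,v,x,y$ are four distinct vertices, and the only missing edge among them is $\{x,y\}$, so they induce a diamond, contradicting $(1)$.

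Next I would prove $(2)\Rightarrow(3)$ and $(3)\Rightarrow(1)$ together via the observation above. For $(2)\Rightarrow(3)$, fix $v$ and suppose $G[N(v)]$ is not a disjoint union of complete graphs; then it has an induced path $x-y-z$ with $x,y,z\in N(v)$ and $\{x,z\}\notin E$. The edge $\{v,y\}$ then lies in the cliques $\{v,y,x\}$ and $\{v,y,z\}$, whose maximal extensions are distinct (one contains $x$, the other $z$, and $\{x,z\}\notin E$), contradicting $(2)$. For $(3)\Rightarrow(1)$, an induced diamond supplies an apex $a$ (an endpoint of the shared edge) whose neighborhood contains the induced path $c-b-d$ on the remaining three vertices, immediately contradicting $(3)$.

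The main obstacle is the uniqueness step $(1)\Rightarrow(2)$: the delicate points are invoking the maximality of $K_2$ to extract a witness $y$ nonadjacent to $x$, and then verifying that $u,v,x,y$ are genuinely four distinct vertices whose induced subgraph is exactly the diamond (in particular that $\{x,y\}$ is the unique nonedge). The remaining implications are then routine applications of the ``$P_3$ in a neighborhood'' reformulation of containing a diamond.
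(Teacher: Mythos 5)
Your proof is correct. Note, however, that the paper does not prove this lemma at all: it states it as a known characterization and cites the literature (the references following ``Diamond-free graphs can be characterized as follows''), so there is no internal proof to compare against; your argument supplies what the paper omits. The cycle $(1)\Rightarrow(2)\Rightarrow(3)\Rightarrow(1)$ is sound. In $(1)\Rightarrow(2)$, the existence of $x\in K_1\setminus K_2$ is justified because two distinct maximal cliques are incomparable, the maximality of $K_2$ indeed yields $y\in K_2$ nonadjacent to $x$, and your distinctness and edge checks are right: $ux,vx\in E$ from $K_1$, $uy,vy\in E$ from $K_2$, $uv\in E$ by hypothesis, and $xy\notin E$, so $\{u,v,x,y\}$ induces exactly a diamond. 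In $(2)\Rightarrow(3)$, the two cliques $\{v,y,x\}$ and $\{v,y,z\}$ extend to maximal cliques that must differ since no maximal clique can contain both $x$ and $z$, and both contain the edge $\{v,y\}$. In $(3)\Rightarrow(1)$, the apex of an induced diamond exhibits an induced $P_3$ in its neighborhood. The only fact you quote without proof—that a graph is a disjoint union of complete graphs if and only if it has no induced $P_3$—is standard and its use here is unobjectionable (for $(2)\Rightarrow(3)$ one can see it directly: a component that is not complete contains two nonadjacent vertices, and three consecutive vertices on a shortest path between them induce a $P_3$).
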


For a vertex $v$ in a graph $G$, we denote by $\adeg(v)$ the \emph{independence degree} of $v$ in $G$, that is, the maximum cardinality of an independent set $S$ in $G$ such that $S\subseteq N(v)$, where $N(v)$ is the set of neighbors of $v$ in $G$.
Note that if $G$ is triangle-free, then every neighborhood of a vertex is an independent set, and the independence degree of a vertex coincides with its degree.
\cref{lem:diamond-free-characterizations} implies the following.

\begin{corollary}\label{cor:diamond-free-alpha-degree}
Given a diamond-free graph $G=(V,E)$ and a vertex $v$ of $G$, the independence degree of $v$ in $G$ can be computed in time  $\O(|V|+|E|)$.
\end{corollary}

Given a partially ordered graph $(G,\preceq)$, our approach is based on the following demand function on the vertices of a graph $G$ that depends on the independence degrees of the vertices of $G$ and the partial order given by $\preceq$.

Let $(G,\preceq)$ be a partially ordered graph with $G = (V,E)$. 
The \emph{$\alpha$-ranking of $(G,\preceq)$} is the demand function $\ell:V\to \mathbb{Z}_+$ on the vertices of $G$ such that 
\begin{equation}\label{definition-of-ell}
\ell(v) = \begin{cases}
 \adeg(v) & \textrm{if } v \in M\,,\\
  \max\left\{\adeg(v),1+\max\limits_{u\colon  u \prec v} \ell(u)\right\} &
  \textrm{if } v \not\in M\,,
\end{cases}
\end{equation}
where $M$ is the set of minimal elements in the poset $(V, \preceq)$.

We prove two lemmas, one for general partially ordered graphs and one for the diamond-free case.

\begin{lemma}\label{properties-of-alpha-ranking}
Let $(G,\preceq)$ be a partially ordered graph with $G = (V,E)$ and let $\ell$ be the $\alpha$-ranking of $(G,\preceq)$.
Then, $\ell(v)\le 2|V|$ for all $v\in V$, and every intersection representation of $(G,\preceq)$ is an $\ell$-constrained intersection representation of $G$.
In particular, $\IN(G,\ell)\le \IN(G, \preceq)$.
\end{lemma}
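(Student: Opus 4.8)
The plan is to establish two claims: the numerical bound $\ell(v) \le 2|V|$ for all $v$, and the structural fact that any intersection representation $(U,\varphi)$ of $(G,\preceq)$ automatically satisfies $|\varphi(v)| \ge \ell(v)$, so that it is an $\ell$-constrained intersection representation of $G$. The final inequality $\IN(G,\ell) \le \IN(G,\preceq)$ is then immediate: an optimal intersection representation of $(G,\preceq)$ is in particular an $\ell$-constrained representation of $G$, whose cardinality is at least $\IN(G,\ell)$.

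For the bound $\ell(v) \le 2|V|$, I would argue by induction along the partial order, processing vertices so that every $u \prec v$ is handled before $v$. The base case covers minimal elements, where $\ell(v) = \adeg(v) \le \deg(v) \le |V|-1$. For a non-minimal $v$, the recursive term is $1 + \max_{u \prec v} \ell(u)$; the subtle point is that $\max_{u \prec v} \ell(u)$ could in principle accumulate along a long chain, so I need a sharper invariant than a naive induction gives. The natural strengthening is to track how the $\ell$-values grow along chains. Since each step in a chain increases $\ell$ by at most one more than the previous value allows, and since the length of any chain is bounded by $|V|$, one obtains that the chain-contribution $1 + \max_{u \prec v}\ell(u)$ is at most $|V|$ (as $\ell$ starts from a value at most $|V|-1$ and the chain through strictly $\prec$-smaller elements has at most $|V|-1$ further steps, but we must be careful that each step adds at most one to the max). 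Combined with the $\adeg(v) \le |V|-1$ term inside the outer maximum, this yields $\ell(v) \le |V| + (|V|-1) < 2|V|$; I expect the cleanest route is to prove the chain-based claim $\ell(v) \le \adeg(v) + (\text{height of } v)$ or a similar invariant and then bound both summands by $|V|$.

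For the structural claim, let $(U,\varphi)$ be any intersection representation of $(G,\preceq)$. I would again induct along $\preceq$ and show $|\varphi(v)| \ge \ell(v)$. The first lower bound is $|\varphi(v)| \ge \adeg(v)$: take a maximum independent set $S \subseteq N(v)$ with $|S| = \adeg(v)$; each $w \in S$ is adjacent to $v$, so $\varphi(w) \cap \varphi(v) \neq \emptyset$, giving each such $w$ a private color in $\varphi(v)$; but since the vertices of $S$ are pairwise nonadjacent, their color sets are pairwise disjoint, so the colors they contribute to $\varphi(v)$ are distinct, forcing $|\varphi(v)| \ge |S| = \adeg(v)$. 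The second bound is the order constraint: for any $u \prec v$, the definition of an intersection representation of $(G,\preceq)$ forces $|\varphi(u)| < |\varphi(v)|$, hence $|\varphi(v)| \ge 1 + |\varphi(u)| \ge 1 + \ell(u)$ by the induction hypothesis, and taking the maximum over all such $u$ gives $|\varphi(v)| \ge 1 + \max_{u \prec v} \ell(u)$. Combining the two lower bounds yields $|\varphi(v)| \ge \ell(v)$, matching the definition in \cref{definition-of-ell}.

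The main obstacle I anticipate is the disjointness argument in the $\adeg(v)$ bound: it is precisely here that the independence of $S$ is needed so that the colors the neighbors force into $\varphi(v)$ are genuinely distinct, and this is the step that explains why the independence degree, rather than the ordinary degree, is the correct quantity. One must verify that the intersection of $\varphi(w)$ with $\varphi(v)$ for nonadjacent $w,w' \in S$ cannot share a color: if they did, that shared color would lie in $\varphi(w) \cap \varphi(w')$, forcing $w$ and $w'$ to be adjacent, contradicting $S$ being independent. Everything else is a routine induction along the partial order, so I would organize the proof as two short inductive lemmas (the bound and the lower bound on $|\varphi(v)|$) followed by the one-line deduction of $\IN(G,\ell) \le \IN(G,\preceq)$.
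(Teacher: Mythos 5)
Your second and third paragraphs---the induction along $\preceq$ showing $|\varphi(v)|\ge\ell(v)$ for any intersection representation $(U,\varphi)$ of $(G,\preceq)$, using the pairwise disjointness of the color sets of an independent set $S\subseteq N(v)$ to get $|\varphi(v)|\ge\adeg(v)$, and the strict growth of $|\varphi(\cdot)|$ along $\prec$ to get $|\varphi(v)|\ge 1+\max_{u\prec v}\ell(u)$---are correct and coincide with the paper's own argument, as does the one-line deduction of $\IN(G,\ell)\le\IN(G,\preceq)$.

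The gap is in the first half, the bound $\ell(v)\le 2|V|$: both quantitative claims you propose there are false as stated, so the induction you sketch would not go through. Take $G=K_{1,3}$ with center $c$ and leaves $l_1,l_2,l_3$, and let $\preceq$ be the partial order generated by $c\prec l_1\prec l_2$. Then $\ell(c)=\adeg(c)=3$, $\ell(l_1)=\max\{1,1+3\}=4$, and $\ell(l_2)=\max\{1,1+4\}=5$. So the ``chain-contribution'' $1+\max_{u\prec l_2}\ell(u)=5$ exceeds $|V|=4$, refuting your claim that it is at most $|V|$; and your candidate invariant $\ell(v)\le\adeg(v)+\mathrm{height}(v)$ fails at $l_2$, where $\adeg(l_2)+\mathrm{height}(l_2)=1+2=3<5$. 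The reason is that the recursion in \cref{definition-of-ell} propagates the largest independence degree of any vertex \emph{below} $v$, not $\adeg(v)$ itself, so no invariant anchored at $\adeg(v)$ can survive a vertex of large independence degree lying under one of small independence degree. (A separate slip: your two bounds enter $\ell(v)$ through a maximum, yet you combine them additively to reach $|V|+(|V|-1)$; with the premises you state, the maximum would give $\le|V|$, but the premises themselves are what fail.) The repair is the invariant the paper uses: by induction along $\preceq$, $\ell(v)\le\max_{u\in V}\adeg(u)+\lambda(v)$, where $\lambda(v)$ is the length of a longest chain ending at $v$ (equivalently, a longest path ending at $v$ in a DAG representing the poset); each application of the recursion either returns $\adeg(v)\le\max_{u\in V}\adeg(u)$ or adds exactly one to a value attained strictly below $v$, so the invariant is preserved, and $\max_{u\in V}\adeg(u)\le|V|-1$ together with $\lambda(v)\le|V|-1$ yields $\ell(v)\le 2|V|-2\le 2|V|$.
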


\begin{proof}
Note that if $D=(V,A)$ is any DAG representing the poset $(V,\preceq)$ and $u$ and $v$ are two distinct vertices in $D$ such that there is a $u,v$-path in $D$, then $\ell(u)<\ell(v)$.
This implies that
\begin{equation}\label{computation-of-ell}
\ell(v) =  \max\left\{\adeg(v),1+\max\limits_{u\colon  (u,v)\in A} \ell(u)\right\}, \text{~for all~} v \not\in M.
\end{equation}
Consequently, denoting by $\lambda$ the maximum length (that is, the number of arcs) in a path in $D$, for each vertex $v\in V$, we have $\ell(v)\le \max_{v\in V}\adeg(v)+\lambda\le 2|V|$.

Let $(U, \varphi)$ be an intersection representation of $(G,\preceq)$. 
To show that $(U, \varphi)$ is an \hbox{$\ell$-constrained} intersection representation of $G$, we need to show that for each vertex $v\in V$, we have $|\varphi(v)|\ge \ell(v)$. 
Letting $S$ be a maximum independent set in the subgraph of $G$ induced by $N(v)$, by the definition of intersection representation, the color sets of vertices in $S$ are pairwise disjoint. 
On the other hand, each one of these color sets has a nonempty intersection with $\varphi(v)$. 
It follows that $|\varphi(v)| \geq |S| = \adeg(v)$.
Moreover, by the definition of an intersection representation of a partially ordered graph, we have that $|\varphi(v)| > |\varphi(u)|$, for each $u \prec v$.
Therefore, $\varphi(v) \ge 1+\max\{\varphi(u)\colon (u,v)\in A\}$ if $v\not\in M$.
By induction on the length of a longest path in $D$ ending in $v$, it follows that $|\varphi(v)|\ge \ell(v)$.
Thus, $(U, \varphi)$ is an $\ell$-constrained intersection representation of $G$.
\end{proof}

\begin{lemma}\label{fromPO-to-LBdiam-free-graphs}
Let $(G,\preceq)$ be a partially ordered graph such that $G = (V,E)$ is diamond-free.
Let $\ell$ be the $\alpha$-ranking of $(G,\preceq)$.
Then, every optimal \hbox{$\ell$-constrained} intersection representation $(U,\varphi)$ of $G$ can be transformed to an optimal $\ell$-constrained intersection representation $(U,\psi)$ of $G$ that is also an optimal intersection representation of $(G,\preceq)$.
In particular, we have $\IN(G, \preceq) = \IN(G,\ell)$.
Moreover, this transformation can be done in time \hbox{$\O(|U|\cdot|V|+|E|\cdot\max_{v\in V}|\varphi(v)|+\mu(|V|,|E|))$}, where $\mu(|V|,|E|)$ denotes the time complexity of computing the set of maximal cliques of $G$.
\end{lemma}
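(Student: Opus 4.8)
The plan is to prove the two inequalities $\IN(G,\ell)\le\IN(G,\preceq)$ and $\IN(G,\preceq)\le\IN(G,\ell)$, the first of which is already supplied by \cref{properties-of-alpha-ranking}; that lemma also tells us that any intersection representation of $(G,\preceq)$ is automatically $\ell$-constrained. The whole weight of the statement therefore rests on producing, from an arbitrary optimal $\ell$-constrained representation $(U,\varphi)$, a representation $(U,\psi)$ over the same color set that is an intersection representation of $(G,\preceq)$. First I would record the decisive monotonicity property of the $\alpha$-ranking: if $u\prec v$ then $\ell(u)<\ell(v)$, which is immediate from \cref{definition-of-ell}. Because of this it suffices to manufacture a $\psi$ with $|\psi(v)|=\ell(v)$ for every $v$: such a $\psi$ respects $\prec$ automatically, is $\ell$-constrained, and uses $|U|=\IN(G,\ell)$ colors, so it is simultaneously optimal for $(G,\preceq)$ and forces $\IN(G,\preceq)\le\IN(G,\ell)$, hence equality.

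The core is thus a reduction-to-demand procedure: starting from $(U,\varphi)$, I would repeatedly lower the size of any \emph{over-demanded} vertex (one with $|\varphi(v)|>\ell(v)$) by exactly one, without disturbing the sizes of the other vertices and without changing $|U|$. Here the diamond-free hypothesis enters through \cref{lem:diamond-free-characterizations}: every color class is a clique and hence lies inside a \emph{unique} maximal clique, and $\adeg(v)$ equals the number of maximal cliques through $v$, which is exactly the least number of colors needed at $v$ to cover its incident edges. The single-step surgery is established as follows. Since $|\varphi(v)|>\ell(v)\ge\adeg(v)$, the vertex $v$ either carries a private color or carries at least two color classes inside some maximal clique $Q$. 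A private color could simply be deleted, producing a strictly smaller $\ell$-constrained representation and contradicting optimality; so $v$ carries two classes with $K_c,K_{c'}\subseteq Q$ and $v\in K_c\cap K_{c'}$. If $K_c\cap K_{c'}=\{v\}$, merging them into the single clique $K_c\cup K_{c'}$ again yields a smaller representation, contradicting optimality; hence $|K_c\cap K_{c'}|\ge 2$.

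Given such a pair, I would first \emph{uncross} it, replacing $K_c,K_{c'}$ by $K_c\cup K_{c'}$ and $K_c\cap K_{c'}$; a short count shows this leaves every vertex's size unchanged, keeps all classes inside $Q$ (so no non-edge is ever covered), and makes the two classes nested. I would then delete $v$ from the smaller class $K_c\cap K_{c'}$: the edges $\{v,x\}$ with $x\in(K_c\cap K_{c'})\setminus\{v\}$ remain covered by the larger class $K_c\cup K_{c'}$, no other edge is affected, both classes stay nonempty so $|U|$ is unchanged, and only $|\psi(v)|$ drops, by one. Iterating this surgery strictly decreases the potential $\sum_{v}(|\varphi(v)|-\ell(v))$, so after finitely many steps we reach $|\psi(v)|=\ell(v)$ for all $v$, as required; the stated running time follows from organizing the surgeries clique-by-clique rather than one unit at a time, after computing the maximal cliques in time $\mu(|V|,|E|)$.

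The step I expect to be the main obstacle is the surgery lemma, and specifically the two optimality-based exclusions --- that an over-demanded vertex can have neither a deletable private color nor a ``sunflower'' pair of color classes meeting only in $v$. Both exclusions are what guarantee that each reduction can be performed \emph{locally} and \emph{for free} (no increase in $|U|$, no collateral increase of another vertex's size), and both hinge on the diamond-free structure that confines every color class to one maximal clique; without it, deleting $v$ from a class could uncover edges belonging to several maximal cliques at once, and the compensation would no longer be local.
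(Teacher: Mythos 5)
Your argument for the combinatorial core of the lemma is correct, but it takes a genuinely different route from the paper. The paper's proof is a direct two-phase rewriting that works on the given representation without ever invoking its optimality for the construction itself: first, for every maximal clique $K$ with $|K|\ge 2$ it picks a color $c_K$ shared by two vertices of $K$ and, using diamond-freeness exactly as you do (the class of $c_K$ is forced to lie inside $K$, by \cref{lem:diamond-free-characterizations}), it \emph{adds} $c_K$ to every vertex of $K$; after this saturation step every vertex $v$ holds exactly $\adeg(v)$ clique-colors, and $\psi(v)$ is defined in one shot as these clique-colors plus exactly $\ell(v)-\adeg(v)$ arbitrary non-clique colors of $\varphi(v)$, giving $|\psi(v)|=\ell(v)$ for all $v$ simultaneously. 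Your proof instead extracts structure from optimality: an over-demanded vertex in an optimal representation can have neither a private color nor a pair of classes meeting only in $v$ (either would free a color and contradict minimality of $|U|$), so a pigeonhole argument inside a maximal clique produces two crossing classes, which you uncross and trim one unit at a time, with $\sum_v(|\varphi(v)|-\ell(v))$ as a terminating potential. Both proofs rest on the same two pillars --- monotonicity of the $\alpha$-ranking under $\prec$ and confinement of each color class to a unique maximal clique --- and your reduction of the problem to achieving $|\psi(v)|=\ell(v)$ for all $v$ matches the paper's. What the approaches buy is different: yours yields structural information about what optimal representations cannot contain; the paper's yields the algorithm and its running time essentially for free. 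Note also that the paper's additions never hurt because they are later truncated away, whereas your surgery must preserve validity at every intermediate step, which you do verify correctly.

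The one genuine shortfall is the claimed time bound, which is part of the statement. Your unit-step surgery can be invoked $\sum_v(|\varphi(v)|-\ell(v))$ times, which can be of order $|V|\cdot|U|$, and each step must locate an over-demanded vertex, test for private colors, find two classes in a common maximal clique, and perform the uncrossing; a naive implementation therefore exceeds $\O(|U|\cdot|V|+|E|\cdot\max_{v\in V}|\varphi(v)|+\mu(|V|,|E|))$ by roughly a factor of the per-step cost. Your closing remark that this is repaired by ``organizing the surgeries clique-by-clique'' points in the right direction, but that batching is exactly the nontrivial part (e.g., replacing all classes through $v$ inside a clique $Q$ by their nested threshold family in one pass, then deleting $v$ from the deepest ones, with all the attendant bookkeeping), and it is asserted rather than carried out. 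So the equality $\IN(G,\preceq)=\IN(G,\ell)$ and the existence of the transformation are fully proved in your proposal, but the stated complexity is not.
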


\begin{proof}
It suffices to show that every optimal $\ell$-constrained intersection representation $(U,\varphi)$ of $G$ can be transformed in polynomial time to an optimal $\ell$-constrained intersection representation $(U,\psi)$ of $G$ that is also an intersection representation of $(G,\preceq)$.
This indeed suffices, since by \cref{properties-of-alpha-ranking}, any intersection representation of $(G, \preceq)$ with smaller cardinality than $(U,\psi)$ would also be an $\ell$-constrained intersection representation of $G$, contradicting the optimality of $(U,\varphi)$.

Let $(U, \varphi)$ be an optimal $\ell$-constrained intersection representation of $G$.
Let us denote by $\mathcal{K}$ the set of all maximal cliques $K$ in $G$ such that $|K|\ge 2$.
We first modify the representation $(U, \varphi)$ by performing the following procedure for each  maximal clique $K\in\mathcal{K}$.
We arbitrarily select two distinct vertices $u$ and $v$ in $K$.
Since $\{u,v\}\in E$, we have $\varphi(u)\cap \varphi(v) \neq \emptyset$; in particular, there exists a color $c_K\in \varphi(u)\cap \varphi(v)$.
Note that if $z$ is a vertex in $G$ such that $c_K\in \varphi(z)$, then $z\in K$, since otherwise $\{u,v,z\}$ would be included in a maximal clique other than $K$, contradicting the fact that, by \cref{lem:diamond-free-characterizations}, $K$ is the only maximal clique containing the edge $\{u,v\}$.
Hence, if for all $z\in K\setminus \{u,v\}$ such that $c_K\not\in \varphi(z)$, we add the color $c_K$ to the set $\varphi(z),$ we obtain an optimal $\ell$-constrained intersection representation of $G$.
After all maximal cliques of $G$ are processed this way, the resulting representation $(U, \varphi)$ is an optimal $\ell$-constrained intersection representation of $G$ such that for each maximal clique $K$ with $|K|\ge 2$ there is a color $c_K$ that is shared by color sets of all members of $K$.
Furthermore, for any two distinct maximal cliques $K$ and $K'$ in  $\mathcal{K}$, we have that $c_K\neq c_{K'}$.

We now show how to associate to each vertex $v\in V$ a subset $\psi(v)\subseteq \varphi(v)$ so that the resulting pair $(U,\psi)$ is an optimal $\ell$-constrained intersection representation of $G$ that is also an intersection representation of $(G, \preceq)$.
Let us write $U_\mathcal{K} = \{c_K\colon K\in \mathcal{K}\}$.
Note that for each vertex $v\in V$, we have that $|\varphi(v)\cap U_\mathcal{K}| = \adeg(v)$ and, hence, $|\varphi(v)\setminus U_\mathcal{K}| = |\varphi(v)|-\adeg(v)\ge \ell(v)-\adeg(v)$.
For each $v\in V$ such that $|\varphi(v)| >  \ell(v)$, we select an arbitrary set $X_v\subseteq \varphi(v)\setminus U_\mathcal{K}$ such that $|X_v| =  \ell(v)-\adeg(v)$.
Then, we define, for each $v\in V$
\[\psi(v) = \left.
  \begin{cases}
    (\varphi(v)\cap U_\mathcal{K})\cup X_v, & \text{if $|\varphi(v)| >  \ell(v)$\,,}\\
    \varphi(v), & \text{otherwise.}
\end{cases}
\right.
\]
Note that for all $v\in V$, we have that $\psi(v)\subseteq \varphi(v)$ and $|\psi(v)| = \ell(v)$.
The recursive definition of the function $\ell$ implies that $\ell(u) <  \ell(v)$ whenever $u\prec v$, and, hence, $|\psi(u)| < |\psi(v)|$ whenever $u \prec v$.
Next, we show that $(U,\psi)$ is an intersection representation of $(G, \preceq)$.
To this end, it remains to show that for any two distinct vertices $u$ and $v$ of $G$, we have $\{u,v\} \in E$ if and only if $\psi(u) \cap \psi(v) \neq \emptyset$.
Suppose first that $\{u,v\}= e \in E$. 
Then there exists a unique maximal clique $K\in \mathcal{K}$ such that $e\subseteq K$, and $c_K \in \varphi(u) \cap \varphi(v) \cap U_\mathcal{K} \subseteq \psi(u) \cap \psi(v)$, since we obtained the mapping $\psi$ from $\varphi$ by only removing colors not in $U_\mathcal{K}$.
Suppose now that $\{u,v\} \not \in E$.
Then $\psi(u) \cap \psi(v) = \emptyset$ follows directly from the relations $\psi(u) \subseteq \varphi(u)$ and $\psi(v) \subseteq \varphi(v)$, using the fact that $\varphi(u) \cap \varphi(v) = \emptyset$.
This shows that $(U,\psi)$ is an intersection representation of $(G, \preceq)$. 
Furthermore, since $|\psi(v)|= \ell(v)$, for all $v\in V$, and $|U| = \IN(G,\ell)$, the pair $(U,\psi)$ is an optimal $\ell$-constrained intersection representation of $G$. 

To conclude the proof, we show that $\psi$ can be computed within the stated time complexity bound.
We can assume that the set $U$ of colors is of the form $\{1,\ldots, |U|\}$. 
Then, for each $v\in V$, we can sort the set $\varphi(v)$ in time $\mathcal{O}(|U|)$.
By assumption, the set $\mathcal{K}$ of maximal cliques with at least two vertices can be computed in time $\O(\mu(|V|,|E|))$.
Having sorted the sets, for each clique $K\in \mathcal{K}$, we can compute $c_K$ in time $\mathcal{O}(|\varphi(u)|+|\varphi(v)|)$
where $u$ and $v$ are two arbitrary vertices in $K$; in total, this takes time $\mathcal{O}(|\mathcal{K}|\max_{v\in V}|\varphi(v)|)$. 
Furthermore, for each vertex $v\in V$ we can assure that for each of the $\adeg(v)$ cliques $K\in \mathcal{K}$ containing $v$, the color $c_K$ is contained in $\varphi(v)$ in time $\O(|U|+\adeg(v))$, which simplifies to $\O(|U|)$ since $\adeg(v)\le |U|$; in total, this takes time $\mathcal{O}(|U|\cdot|V|)$.
Next, for each $v\in V$, we compute the set $\psi(v)$ using its definition in time $\mathcal{O}(|U|+\adeg(v))$; in total, this again takes time $\mathcal{O}(|U|\cdot|V|)$.
Note that \cref{lem:diamond-free-characterizations} implies that $|\mathcal{K}| \leq |E|$, since each edge belongs to exactly one clique in $\mathcal{K}$.
Therefore, the total time complexity is $\O(|U|\cdot|V|+|E|\cdot\max_{v\in V}|\varphi(v)|+\mu(|V|,|E|))$, as claimed.
\end{proof}

\Cref{fromPO-to-LBdiam-free-graphs} implies the following two results, highlighting a slightly different time complexity depending on whether the diamond-free graph $G$ is also triangle-free.
For the general diamond-free case, we obtain the following.

\begin{theorem}\label{thm:diamond-free}
    Let $(G,\preceq)$ be a partially ordered graph such that $G = (V,E)$ is diamond-free.
    Then, the $\alpha$-ranking $\ell:V\to \mathbb{Z}_+$ of $(G,\preceq)$ can be computed in time $\O(|V|\cdot (|V|+|E|))$, and every optimal \hbox{$\ell$-constrained} intersection representation $(U,\varphi)$ of $G$ can be transformed to an optimal $\ell$-constrained intersection representation $(U,\psi)$ of $G$ that is also an optimal intersection representation of $(G,\preceq)$.
    In particular,  we have $\IN(G, \preceq) = \IN(G,\ell)$.
    Moreover, this transformation can be done in time \hbox{$\O(|V|^3\log{|V|}+|U|\cdot|V|+|E|\cdot\max_{v\in V}|\varphi(v)|)$}.
\end{theorem}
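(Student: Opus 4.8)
The plan is to derive \Cref{thm:diamond-free} as a direct corollary of \Cref{fromPO-to-LBdiam-free-graphs}, since the two statements are nearly identical: the only differences are (i) an explicit time bound for computing the $\alpha$-ranking $\ell$, and (ii) an instantiation of the generic term $\mu(|V|,|E|)$ (the cost of listing all maximal cliques) by a concrete bound for diamond-free graphs. Thus \emph{all} the structural content---the transformation of an optimal $\ell$-constrained representation $(U,\varphi)$ into an optimal intersection representation $(U,\psi)$ of $(G,\preceq)$, and the equality $\IN(G,\preceq)=\IN(G,\ell)$---is already supplied by \Cref{fromPO-to-LBdiam-free-graphs} and may be invoked verbatim. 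What remains is purely a matter of supplying two running-time estimates.

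First I would bound the cost of computing the $\alpha$-ranking. By definition (\Cref{definition-of-ell}), $\ell$ depends on the independence degrees $\adeg(v)$ and on the partial order $\preceq$. By \Cref{cor:diamond-free-alpha-degree}, each $\adeg(v)$ is computable in time $\O(|V|+|E|)$, so all $n=|V|$ values take $\O(|V|\cdot(|V|+|E|))$ in total. The recursive maximum in \Cref{definition-of-ell} can then be evaluated by processing the vertices in a topological order of the underlying DAG $D$ representing $\preceq$ (which, per the conventions in \cref{subsec:INs}, is part of the input), propagating along arcs as in \Cref{computation-of-ell}; this reachability-style dynamic program runs in $\O(|V|+|A|)$ time, dominated by the independence-degree computation. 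Hence the $\alpha$-ranking is computed in time $\O(|V|\cdot(|V|+|E|))$, matching the claim.

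Second, I would bound $\mu(|V|,|E|)$, the cost of enumerating all maximal cliques of a diamond-free graph $G$. By \Cref{lem:diamond-free-characterizations}, every edge lies in a unique maximal clique, so the maximal cliques with at least two vertices partition $E$ and there are at most $|E|$ of them. One can enumerate them, for example, by computing for each vertex $v$ the connected components of $G[N(v)]$---each such component, together with $v$, forms a clique (again by \Cref{lem:diamond-free-characterizations}, since $G[N(v)]$ is a disjoint union of complete graphs)---and collecting the distinct cliques obtained. A clean way to achieve the stated bound is to invoke a general maximal-clique enumeration algorithm whose cost is polynomial in the output size, which here is $\O(|E|)$ cliques of total size $\O(|V|\cdot|E|)$; the $\O(|V|^3\log|V|)$ term in the statement then absorbs the enumeration cost. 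Substituting $\mu(|V|,|E|)=\O(|V|^3\log|V|)$ into the transformation time $\O(|U|\cdot|V|+|E|\cdot\max_{v\in V}|\varphi(v)|+\mu(|V|,|E|))$ from \Cref{fromPO-to-LBdiam-free-graphs} yields exactly $\O(|V|^3\log|V|+|U|\cdot|V|+|E|\cdot\max_{v\in V}|\varphi(v)|)$.

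The main obstacle, modest as it is, lies in pinning down a clean running-time bound for maximal-clique enumeration that justifies the $\O(|V|^3\log|V|)$ figure rather than in any new combinatorial idea: the correctness of the transformation is entirely inherited from \Cref{fromPO-to-LBdiam-free-graphs}. I would therefore be careful to cite a concrete enumeration algorithm (or to describe the neighborhood-component method above together with a deduplication step, whose overhead accounts for the logarithmic factor) so that the claimed complexity is fully accounted for, and then simply assemble the two time bounds with the transformation bound to conclude.
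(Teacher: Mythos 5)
Your proposal is correct and follows essentially the same route as the paper: the structural content is inherited verbatim from \cref{fromPO-to-LBdiam-free-graphs}, the $\alpha$-ranking is computed via \cref{cor:diamond-free-alpha-degree} together with a topological-sort evaluation of \cref{computation-of-ell}, and the maximal cliques are enumerated by the neighborhood-component method followed by lexicographic sorting and deduplication, which is exactly how the paper obtains $\mu(|V|,|E|)=\O(|V|^3\log|V|)$. One caution: your alternative suggestion of invoking a generic output-polynomial clique-enumeration algorithm would not obviously yield the stated bound (e.g., $\O(|V|\cdot|E|)$ time per clique over $\O(|E|)$ cliques can exceed $\O(|V|^3\log|V|)$ on dense diamond-free graphs), so the sort-and-deduplicate route you describe as a fallback is the one to keep.
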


\begin{proof}
Let $D=(V,A)$ be any DAG representing the poset $(V,\preceq)$.
Recall from the proof of \cref{properties-of-alpha-ranking} that if $v$ is a non-minimal element of the poset $(V,\preceq)$, then the $\alpha$-ranking $\ell$ can be evaluated on $v$ using \cref{computation-of-ell}.
Hence, by traversing the elements of the poset $(V,\preceq)$ following an arbitrary topological sort of $D$ and applying \cref{cor:diamond-free-alpha-degree}, the function $\ell$ can be computed in time $\O(|V|\cdot (|V|+|E|))$.

By \cref{fromPO-to-LBdiam-free-graphs}, to complete the proof, it suffices to show that if $G$ is diamond-free, then one can compute the set $\mathcal{C}(G)$ of maximal cliques of $G$ in time $\O(|V|^3\log |V|)$.
    
    For each vertex $v$ in $G$, by \cref{lem:diamond-free-characterizations}, the subgraph of $G$ induced by $N(v)$ is a disjoint union of complete graphs.
    Hence, for each vertex $v$, we can compute in time $\O(|V|+|E|)$ the set $\mathcal{C}_v$ of $\O(|V|)$ maximal cliques containing $v$.
    The union of these sets, $\bigcup_{v\in V}\mathcal{C}_v$, contains a total of $\O(|V|^2)$ sets, but may contain repetitions.
    We then lexicographically sort the elements of this multiset by performing $\O(|V|^2\log (|V|^2))$ pairwise comparisons, each of which can be done in time $\O(|V|)$ (assuming that the sets are already sorted with respect to some fixed vertex ordering of $V$, which can be achieved, if necessary, with a linear-time preprocessing step).
    Having lexicographically sorted the multiset $\bigcup_{v\in V}\mathcal{C}_v$, we can eliminate repetitions in time $\O(|V|^3)$, thus obtaining the set $\mathcal{C}(G)$. 
\end{proof}

For triangle-free graphs, we obtain an improved time complexity given by~\Cref{fromPO-to-LBtri-free-graphs}, which we restate for convenience.

\potolbtri*

\begin{proof}
We again use \cref{fromPO-to-LBdiam-free-graphs}.
Since $G$ is triangle-free, $\adeg(v) = \deg(v)$, for all $v \in V$.
Thus, proceeding as in the proof of \cref{thm:diamond-free}, the $\alpha$-ranking $\ell$ of $(G,\preceq)$ can be computed in linear time.
Furthermore, following the notation of \cref{fromPO-to-LBdiam-free-graphs}, since $G$ is triangle-free, we have $\mu(|V|,|E|) = \mathcal{O}(|V|+|E|)$, since the set of maximal cliques of $G$ coincides with the set of its isolated vertices and its edges, which can be computed in linear time.
\end{proof}

We conclude this subsection by showing that the inequality given by \cref{properties-of-alpha-ranking}, which is satisfied with equality for the case of diamond-free graphs by \cref{fromPO-to-LBdiam-free-graphs}, can be strict in general.

\begin{proposition}\label{prop:strict}
There exists a partially ordered graph $(G,\preceq)$   such that $\IN(G,\ell)<\IN(G, \preceq)$, where $\ell$ is the $\alpha$-ranking of $(G,\preceq)$.
\end{proposition}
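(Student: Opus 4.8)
The plan is to exploit the single structural feature that \cref{fromPO-to-LBdiam-free-graphs} forbids: a vertex whose neighborhood is not a disjoint union of cliques. The source of slack in \cref{properties-of-alpha-ranking} is that the $\alpha$-ranking bounds $|\varphi(v)|$ from below only by the independence degree $\adeg(v)$, whereas in any representation the true minimum of $|\varphi(v)|$ is the number of cliques needed to cover $N(v)$, which for an \emph{imperfect} neighborhood strictly exceeds $\adeg(v)$. The smallest imperfect graph is $C_5$, for which the clique cover number is $3$ while $\alpha(C_5)=2$, so I would take a $5$-cycle as a neighborhood. Concretely, let $H$ be the wheel obtained from a rim $5$-cycle $x_1x_2x_3x_4x_5$ by adding a hub $v$ adjacent to every $x_i$, let $G$ be $H$ together with one extra \emph{isolated} vertex $w$, and let $\preceq$ be the partial order whose only nontrivial relation is $v\prec w$. (Note $G$ is not diamond-free, e.g.\ $\{v,x_1,x_2,x_3\}$ induces a diamond, consistent with equality being guaranteed only in the diamond-free case.)

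First I would prove the crucial local bound: in \emph{every} intersection representation $(U,\varphi)$ of $G$ one has $|\varphi(v)|\ge 3$, even though $\adeg(v)=\alpha(C_5)=2$. For each rim vertex $x_i$ pick a color $f(x_i)\in\varphi(v)\cap\varphi(x_i)$; if two rim vertices received the same color they would be adjacent, so each color of $\varphi(v)$ is reused on at most two (consecutive) rim vertices, and covering all five rim vertices needs at least $\lceil 5/2\rceil=3$ distinct colors of $\varphi(v)$. Next I would record the $\alpha$-ranking from \cref{definition-of-ell}: since $v$ and all the $x_i$ are minimal, $\ell(v)=\adeg(v)=2$ and $\ell(x_i)=2$, while $\ell(w)=1+\ell(v)=3$ (because $\adeg(w)=0$). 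I would also verify $\IN(H)=5$: each of the five rim edges lies in no clique other than itself and its incident triangle, and no clique of $G$ contains two rim edges, so at least five cliques are required; moreover one can choose three triangles through $v$ covering all five spokes together with two rim-edge cliques covering the remaining rim edges, yielding a $5$-color representation of $H$ with $|\varphi(v)|=3$ and $|\varphi(x_i)|=2$ for all $i$.

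Finally I would compare the two invariants by splitting off the isolated vertex $w$, whose color set must be disjoint from all others. For $\IN(G,\ell)$: the wheel needs exactly $5$ colors, realizable with $|\varphi(v)|=3\ge\ell(v)$ and $|\varphi(x_i)|=2\ge\ell(x_i)$, and $w$ needs exactly $\ell(w)=3$ private colors, giving $\IN(G,\ell)=8$. For $\IN(G,\preceq)$: the wheel still needs $\ge 5$ colors, but the forced bound $|\varphi(v)|\ge 3$ together with $v\prec w$ now forces $|\varphi(w)|\ge 4$, so $w$ needs $\ge 4$ private colors and $\IN(G,\preceq)\ge 9$; padding the representation above with $4$ private colors on $w$ shows $\IN(G,\preceq)=9$. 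Hence $\IN(G,\ell)=8<9=\IN(G,\preceq)$, as claimed. I expect the main obstacle to be the forced lower bound $|\varphi(v)|\ge 3$ (the imperfection argument), with $\IN(H)=5$ via a representation that simultaneously keeps $|\varphi(v)|=3$ a close second; the remaining bookkeeping with $w$ is routine and is precisely where the gap between $\ell(w)=3$ and the genuine requirement $|\varphi(w)|\ge 4$ surfaces.
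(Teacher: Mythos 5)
Your proof is correct, but it takes a genuinely different route from the paper's. The paper constructs a $20$-vertex example: the complete multipartite graph with ten parts $\{u_i,v_i\}$ of size two, with a single relation $x\prec y$ between two arbitrary vertices; it shows $\IN(G,\ell)\le 6$ by assigning the twenty $3$-element subsets of a $6$-element set (the ten $3$-sets containing a fixed color and their complements), and then invokes Sperner's theorem to argue that any representation of $(G,\preceq)$ with at most $6$ colors would have to consist of exactly the twenty $3$-subsets of the ground set, forcing $|\varphi'(x)|=|\varphi'(y)|=3$ and contradicting $x\prec y$. You instead use a $7$-vertex example (the wheel over $C_5$ plus an isolated element $w$ with the single relation $v\prec w$) and locate the slack precisely where \cref{properties-of-alpha-ranking} is weak: the $\alpha$-ranking lower-bounds $|\varphi(v)|$ only by the independence number of the neighborhood $N(v)$, whereas any intersection representation forces $|\varphi(v)|$ to be at least the clique cover number of $N(v)$ (each color class is a clique); for $N(v)=C_5$, the smallest imperfect graph, these differ ($3>2$), and the relation $v\prec w$ transports the gap upward, forcing $|\varphi(w)|\ge 4$ while $\ell(w)=3$. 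Your supporting computations all check out: the $\alpha$-ranking is $2$ everywhere except $\ell(w)=3$; each clique of the wheel contains at most one rim edge, so five colors are necessary, and three triangles through $v$ plus two rim edges realize $\IN(H)=5$ with $|\varphi(v)|=3$; splitting off the isolated vertex then gives $\IN(G,\ell)=8<9=\IN(G,\preceq)$. Comparing the two: your argument is more elementary (no extremal set theory), uses a much smaller graph, determines both invariants exactly rather than only bounding them, and makes the conceptual source of the discrepancy transparent---the $\alpha$-ranking undercounts exactly when some neighborhood fails to be a disjoint union of cliques, i.e., precisely outside the diamond-free regime where \cref{fromPO-to-LBdiam-free-graphs} guarantees equality. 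The paper's argument is heavier but exhibits a different and independently interesting rigidity phenomenon: with the minimum number of colors, \emph{all} color sets are forced to have equal cardinality, so no strict order relation whatsoever can be accommodated.
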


\begin{proof}
Let $G$ be a $20$-vertex graph whose vertex set admits a partition into $10$ parts $\{u_i,v_i\}$, $1\le i\le 10$, such that two distinct vertices of $G$ are adjacent if and only if they belong to different parts.
Note that the independence degree of every vertex of $G$ is equal to~$2$.
Let $\preceq$ be the poset on $V(G)$ in which there is only one pair of vertices in relation, say $x\prec y$ (where $x$ and $y$ are two arbitrary but fixed distinct vertices of $G$).
Then, the $\alpha$-ranking $\ell$ of $(G,\preceq)$ takes value $3$ on $y$ and $2$ on all other vertices.
Consider the following intersection representation $(U,\varphi)$ of $G$.
Let $U = \{1,\ldots,6\}$, assign to $u_1,\ldots, u_{10}$ the $10$ three-element subsets of $\{1,\ldots, 6\}$ containing $1$, and set $\varphi(v_i) = \{1,\ldots, 6\}\setminus \varphi(u_i)$, for all $i\in \{1,\ldots, 10\}$.
Then, $|\varphi(v)| = 3\ge \ell(v)$ for all vertices $v\in V(G)$, which implies that $(U,\varphi)$ is an $\ell$-constrained intersection representation of $G$.
Hence, $\IN(G,\ell) \le 6$.

So, to show that $\IN(G,\ell) < \IN(G, \preceq)$, it suffices to prove that we need more than $6$ colors in every intersection representation of $(G,\preceq)$.
Consider an optimal intersection representation $(U',\varphi')$ of $(G,\preceq)$, and suppose for a contradiction that $|U'| \le 6$.
For each $i\in \{1,\ldots, 10\}$, let $A_i = \varphi'(u_i)$ and $B_i = \varphi'(v_i)$.
Then $A_i\cap B_i = \emptyset$, for all $i\in \{1,\ldots, 10\}$.
Furthermore, for all $1\le i,j\le 10$ with $i\neq j$, we have $A_i\cap A_j\neq \emptyset$, $A_i\cap B_j\neq \emptyset$, $B_i\cap A_j\neq \emptyset$.
This implies that the sets $A_i$ and $A_j$ are incomparable: if, say, $A_i\subseteq A_j$, then $A_i$ would not intersect $B_j$. 
Similarly, the sets $A_i$ and $B_j$ are incomparable, since if, say, $A_i\subseteq B_j$, then $A_i$ would not intersect $A_j$. 
We infer that the $20$ sets $A_1,\ldots, A_{10},B_1,\ldots, B_{10}$ form an antichain in the partial order on the power set of $U'$ given by the inclusion relation.
By Sperner's theorem~\cite{zbMATH02575694}, any antichain $\mathcal{A}$ of subsets of $\{1,\ldots, n\}$ has size at most $\binom{n}{\left\lfloor n/2\right\rfloor}$, with equality if and only if $\mathcal{A}$ consists of all subsets of $\{1,\ldots, n\}$ that have size $\left\lfloor n/2\right\rfloor$ or all that have size $\left\lceil n/2\right\rceil$.
Since $\binom{6}{3} = 20$ and $\binom{n}{\left\lfloor n/2\right\rfloor}<20$ for $n<6$, we infer that $|U'| = 6$ and that the $20$-element set family $\{\varphi'(v)\colon v\in V(G)\}$ coincides with the set of all $3$-element subsets of $U'$.
In particular, $|\varphi'(x)| = |\varphi'(y)| = 3$, a contradiction with the fact that $x\prec y$ and $(U',\varphi')$ is an intersection representation of $(G,\preceq)$.
\end{proof}

\subsection{The \texorpdfstring{$\ell$}{l}-constrained intersection number of triangle-free graphs}

In this section, we prove \cref{triangle-free-l-constrainedIN}, our final ingredient needed for the proof of \cref{triangle-free-Hamiltonian-DIN-main} (see \cref{chain-of-reductions}).

\trianglefreelconstrained*

\begin{proof}
First, we prove that $|E| + b(V) - \nu(G,b)$ is an upper bound on the $\ell$-constrained intersection number of the graph $G$ 
by constructing an $\ell$-constrained intersection representation $(U,\varphi)$ with cardinality at most $|E| + b(V) - \nu(G,b)$. 
We associate to each edge $e\in E$ a unique color $c_e$ and define $U_E = \{c_e\colon e\in E\}$.
Let $x: E\to\mathbb{Z}_+$ be a maximum weight $b$-matching in $G$.
Then $\sum_{e\in E}x(e) = \nu(G,b)$.
We denote by $E_v$ the set of edges in $G$ incident with a vertex $v$.
To each edge $e$ of $G$, we associate another set $C_e$ of $x(e)$ new colors, 
and to each vertex $v$ of $G$, we associate a set $C_v$ of $b(v)-\sum_{e\in E_v}x(e)$ new colors, 
so that the sets $U_E$, $C_e$, $e\in E$, and $C_v$, $v\in V$, are pairwise disjoint.
Note that this construction is indeed possible: 
for each vertex $v$, the value of $b(v)-\sum_{e\in E_v}x(e)$ 
is a non-negative integer since $x$ is a $b$-matching in $G$.
We define
\begin{equation}\label{constructing-U-from-x}
U = U_E\cup \Bigg(\bigcup_{e\in E}C_e\Bigg) \cup \Bigg(\bigcup_{v\in V}C_v\Bigg)
\quad\textrm{and}\quad
\varphi(v) = \{c_e\colon e\in E_v\}\cup C_v\cup \left(\bigcup_{e\in E_v}C_e\right).
\end{equation}
Next, we show that $(U,\varphi)$ is an $\ell$-constrained
intersection representation of $G$.
Clearly $\varphi(v)\subseteq U$ for all $v\in V$.
Furthermore, for each vertex $v\in V$, by definition of $b$, it holds that $b(v) + \deg(v) \geq \ell(v)$, and we have
\begin{flalign}\label{vertex-bound}
|\varphi(v)| = |E_v|+|C_v|+\sum_{e\in E_v}|C_e|\nonumber
&=\deg(v)+b(v)-\sum_{e\in E_v}x(e)+\sum_{e\in E_v}x(e)\\
&=\deg(v)+b(v) \geq \ell(v)\,.
\end{flalign}
Therefore, the color sets assigned to the vertices by $\varphi$ satisfy the lower bounds defined by the demand function $\ell$.
Note also that $|\varphi(v)|= \deg(v)+b(v)\le \max\{\ell(v),\deg(v)\}$ for all $v\in V$.

We now prove that $(U, \varphi)$ is an intersection representation of $G$, by showing that for any distinct vertices $u$ and $v$ of $G$,
it holds that $\{u,v\}\in E$ if and only if $\varphi(u)\cap \varphi(v) \neq \emptyset$.

For this, assume first that $e = \{u,v\}\in E$.
Then $e\in E_{u}\cap E_{v}$ and thus $c_e\in \varphi(u) \cap \varphi(v)$, implying that $\varphi(u) \cap \varphi(v)\neq\emptyset$.
For the converse direction, assume that $\varphi(u)\cap \varphi(v) \neq \emptyset$.
Let $c\in \varphi(u)\cap \varphi(v)$.
Because for any two vertices $u,v\in V$, we have that $C_u\subseteq \varphi(v)$ if and only if $u = v$, the color $c$ cannot belong to any set $C_{z}$ for $z\in V$.
Therefore, since the sets $U_E$, and $C_e$, $e\in E$, are pairwise disjoint, the color $c$ must belong to either $U_E$ or to some set $C_e$ for $e\in E$. 
Assume first that $c\in U_E$.
Then $c = c_e$ for some $e\in E$, which implies that $e\in E_{u}\cap E_{v}$ and thus $e = \{u,v\}$, i.e., $\{u,v\}$ is an edge of $E$.
Assume now that $c\in C_e$ for some $e\in E$.
Since $c\in \varphi(u)\cap \varphi(v)$, we infer that $e\in E_{u}$ and $e\in E_{v}$, which again implies that $\{u,v\}$ is an edge of $G$.
This concludes the proof that $(U,\varphi)$ is an intersection representation of $G$. 

It remains to show that $|U| = |E| + b(V) - \nu(G,b)$.
Using the definition of $U$ we infer that, as claimed, its cardinality is 
\begin{eqnarray*}
|U| &=& |U_E|+\sum_{e\in E}|C_e|+\sum_{v\in V}|C_v| 
= |E|+\sum_{e\in E}x_e+\sum_{v\in V}\left(b(v)-\sum_{e\in E_v}x(e)\right)\\
&=& |E|+\nu(G,b)+\sum_{v\in V}b(v)-\sum_{v\in V}\sum_{e\in E_v}x(e)%
= |E|+\nu(G,b)+b(V)-2\sum_{e\in E}x(e)\\
&=& |E|+\nu(G,b)+b(V)-2\nu(G,b)
= |E|+b(V)-\nu(G,b).
\end{eqnarray*}

Second, we prove that $|E| + b(V) - \nu(G,b)$ is a lower bound for the $\ell$-constrained intersection number of $G$.
Let $(U,\varphi)$ be an arbitrary $\ell$-constrained intersection representation of $G$. 
We need to show that $|U|\ge |E| + b(V) - \nu(G,b)$.
For each edge $e = \{u,v\}\in E$, we have $\varphi(u)\cap \varphi(v) \neq \emptyset$; in particular, there exists a color $c_e\in \varphi(u)\cap \varphi(v)$.
As in the proof of \cref{fromPO-to-LBtri-free-graphs}, we have that for any two distinct edges $e,e'\in E$, we have $c_e\neq c_{e'}$.

Because of the triangle-free condition we have that the neighborhood of $v$ is an independent set, hence, the corresponding color sets are pairwise disjoint. 
On the other hand, each one of these color sets has a nonempty intersection with $\varphi(v)$. 
It follows that $|\varphi(v)| \geq \deg(v)$.
By the assumption that $(U, \varphi)$ is an $\ell$-constrained intersection representation of $G$, we also have $|\varphi(v)| \geq \ell(v)$ for all $v\in V$, and hence $|\varphi(v)| \geq \max\{\ell(v), \deg(v)\}$ for all $v\in V$.

Let $U_E = \{c_e\colon e\in E\}$.
For all $v\in V$, let us denote $\varphi'(v) = \varphi(v)\setminus U_E$.
It follows that for any vertex $v\in V$, we have that 
\[|\varphi'(v)| = |\varphi(v)|-\deg(v) \ge \max\{0, \ell(v)-\deg(v)\} = b(v).\]

\begin{claim}\label{claim-varphi-b}
We may assume without loss of generality that $|\varphi'(v)| = b(v)$, for all $v\in V$.
\end{claim}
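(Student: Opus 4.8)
The plan is to read the claim as a reduction that lets the lower bound be checked only on \emph{tight} representations, namely those with $|\varphi'(v)| = b(v)$ for every $v$. Concretely, I would start from an arbitrary $\ell$-constrained intersection representation $(U,\varphi)$ of $G$ and repeatedly shrink the sets $\varphi(v)$, never enlarging $U$, until every vertex satisfies $|\varphi'(v)| = b(v)$. Since $|U|$ does not increase during this process, any lower bound of the form $|U| \ge |E| + b(V) - \nu(G,b)$ established for the resulting tight representation immediately transfers back to the original one, so it is indeed enough to treat the tight case in the remainder of the proof.

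For a single reduction step, suppose some vertex $v$ has $|\varphi'(v)| > b(v)$ and pick any color $c \in \varphi'(v) = \varphi(v)\setminus U_E$. The structural fact I would invoke is that, because $G$ is triangle-free, the set of vertices whose color set contains $c$ is a clique and hence has at most two elements. Thus either $c$ lies in $\varphi(v)$ alone, or $c \in \varphi(v)\cap\varphi(w)$ for a \emph{unique} vertex $w$ with $\{v,w\}\in E$. In either case I would delete $c$ from $\varphi(v)$ (and drop it from $U$ if it then occurs in no color set). This preserves the intersection property, since the only pair of distinct vertices whose intersection $c$ could have witnessed is $(v,w)$, and that edge remains certified by its designated color $c_{\{v,w\}} \in U_E \subseteq \varphi(v)\cap\varphi(w)$, while all nonadjacent pairs keep disjoint color sets. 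It also preserves the demand constraints: because we act only when $|\varphi'(v)| > b(v)$, after the deletion $|\varphi'(v)| \ge b(v)$ still holds, so $|\varphi(v)| = \deg(v) + |\varphi'(v)| \ge \deg(v) + b(v) = \max\{\ell(v),\deg(v)\} \ge \ell(v)$, and the sets of all other vertices are untouched. Iterating over all vertices terminates with a tight representation whose ground set is no larger than $U$.

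The delicate point, and the one I would treat most carefully, is the shared-color case: a color $c$ that $v$ holds jointly with a neighbor $w$ cannot in general be erased from $U$, since $w$ may still need it to meet its own demand $\ell(w)$. The resolution is to remove $c$ from $\varphi(v)$ only, leaving it in $\varphi(w)$; this is legitimate precisely because the adjacency of $v$ and $w$ is recorded independently of $c$ by the edge color $c_{\{v,w\}}$. It is exactly the triangle-free bound of at most two vertices per color that keeps each deletion local, affecting the color-set intersection of at most one pair, and thereby makes the reduction valid one color at a time.
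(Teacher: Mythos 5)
Your proposal is correct and takes essentially the same approach as the paper: you leave the edge-certificate colors $U_E$ untouched and shrink only $\varphi'(v)=\varphi(v)\setminus U_E$ down to size $b(v)$, so that every edge $\{u,v\}$ stays certified by $c_{\{u,v\}}\in U_E$ while non-edges stay disjoint simply because color sets only shrink. The sole difference is cosmetic: you delete excess colors one at a time (invoking the triangle-free ``at most two vertices per color'' fact to localize each deletion, which is not actually needed for the verification), whereas the paper performs the truncation in a single step by choosing, for each vertex with excess, a subset $X_v\subseteq\varphi'(v)$ with $|X_v|=b(v)$ and setting $\psi(v)=(\varphi(v)\cap U_E)\cup X_v$.
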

\begin{proof}
Suppose that this is not the case.
Then, we choose for each $v\in V$ such that $|\varphi'(v)| > b(v)$, an arbitrary set $X_v\subseteq \varphi'(v)$ such that $|X_v| = b(v)$ and define, for all $v\in V$
\[\psi(v) = \left.
  \begin{cases}
    (\varphi(v)\cap U_E)\cup X_v, & \text{if $|\varphi'(v)| > b(v)$\,,}\\
    \varphi(v), & \text{otherwise.}
\end{cases}
\right.
\]
Similarly as above, let us denote $\psi'(v) = \psi(v)\setminus U_E$ for all $v\in V$.
By construction, we have $|\psi'(v)| = b(v)$, for all $v\in V$, and consequently $|\psi(v)|=|\psi'(v)|+\deg(v) \geq \ell(v)$ for all $v\in V$. 
Hence, $\psi$ satisfies the $\ell$-constraints on the vertices. 
To see that $(U,\psi)$ is indeed an $\ell$-constrained intersection representation of $G$, it remains to 
show that $\{u,v\} \in E$ if and only if 
$\psi(u) \cap \psi(v) \neq \emptyset$.
If $\{u,v\}= e \in E$, then $c_e \in \varphi(u) \cap \varphi(v) \cap U_E \subseteq \psi(u) \cap \psi(v)$, since we obtained the mapping $\psi$ from $\varphi$ by only removing colors not in $U_E$.
If $\{u,v\} \not \in E$ then $\psi(u) \cap \psi(v) = \emptyset$ follows directly from the relations $\psi(u) \subseteq \varphi(u)$ and $\psi(v) \subseteq \varphi(v)$, using the fact that $\varphi(u) \cap \varphi(v) = \emptyset$.
This shows that $(U,\psi)$ is an intersection representation of $G$ and, since $|\psi'(v)| = b(v)$ for all $v\in V$, completes the proof of the claim.
\end{proof}

Since $|U_E| = |E|$, to complete the proof that $|E| + b(V) - \nu(G,b)$ is a lower bound for $|U|$, we need to show that $|U\setminus U_E|\ge b(V)-\nu(G,b)$, or, equivalently, that $b(V)\le |U\setminus U_E|+\nu(G,b)$.
Consider the function $x:E\to \mathbb{Z}_+$ defined by setting
\[
    x(e) = |\varphi'(u)\cap \varphi'(v)|, \text{~for each edge~} e= \{u,v\}\in E\,.\]
We claim that $x$ is a $b$-matching of $G$.
Let $v$ be a vertex of $G$.
Let $E_v$ be the set of edges in $G$ that are incident with $v$ and let $N(v)$ be the set of neighbors of $v$ in $G$.
Recall that by the assumption that the graph is triangle-free, each color $c\in U\setminus U_E$ appears in at most two of the color sets $\varphi'(u)$, $u\in V$.
Therefore, for each vertex $v\in V$, the sets $\varphi'(u)\cap \varphi'(v)$, $u\in N(v)$, are pairwise disjoint.
By \cref{claim-varphi-b}, this implies that 
\[\sum_{e\in E_v}x(e) = \sum_{u\in N(v)}|\varphi'(u)\cap \varphi'(v)| = \left|\varphi'(v)\cap \left(\bigcup_{u\in N(v)}\varphi'(u)\right)\right|\le |\varphi'(v)| = b(v)\,,\]
and, hence, $x$ is indeed a $b$-matching of $D$.
For each $v\in V$, let us denote by $C_v$ the set of \emph{private} colors of $v$, that is, those colors $c\in \varphi'(v)$ such that $c\not\in \varphi'(u)$ for all $u\in V\setminus\{v\}$.
Using again \cref{claim-varphi-b}, we have
\begin{eqnarray*}
b(V) &=&\sum_{v\in V}b(v) 
 = \sum_{v\in V}|\varphi'(v)| %
= \sum_{v\in V}\left(|C_v|+|\varphi'(v)\setminus C_v|\right)\\
&=&  \sum_{v\in V}|C_v|+\sum_{v\in V}\sum_{u\in N(v)}|\varphi'(u)\cap\varphi'(v)|
=\sum_{v\in V}|C_v|+2\sum_{e\in E}x(e)\\
&=& 
\left(\sum_{v\in V}|C_v|+\sum_{e\in E}x(e)\right)+\sum_{e\in E}x(e) %
\le |U\setminus U_E|+\nu(G,b)\,,
\end{eqnarray*}
where the last inequality follows from the fact that each color in $U\setminus U_E$ is counted exactly once in the sum $\sum_{v\in V}|C_v|+\sum_{e\in E}x(e)$ and that $\sum_{e\in E}x(e)\le \nu(G,b)$, since $x$ is a $b$-matching in $D$.

Finally, we observe that an optimal $\ell$-constrained
intersection representation $(U,\varphi)$ of $G$ such that $|\varphi(v)|\le \max\{\ell(v),\deg(v)\}$, for all $v\in V$, can be computed in polynomial time. 
First, we compute the capacity function $b$ according to the definition.
Including also the time for the computation of the vertex degrees, this can be done in time $\O(|V|+|E|+|V| \cdot \log L)$.
Then we compute a maximum weight $b$-matching $x$ in $G$.
By \cref{max-weight-b-matching}, this can be done in time $\O\left(\min\{B \cdot |V|^2, |E|^2  \cdot \log |V| \cdot \log B\}\right)$, where $B = 1+\max_{v\in V}b(v)$.
Finally, we use \cref{constructing-U-from-x} to compute an intersection representation $(U,\varphi)$ of $(G,\preceq)$ with cardinality $|E|+b(V)-\nu(G,b)$.
This can be done in time proportional to the total size of this representation, which is, {by \cref{vertex-bound},} 
$\sum_{v\in V}|\varphi(v)| = \sum_{v\in V}{(b(v)+\deg(v))} \le \sum_{v\in V}{\max\{\ell(v),\deg(v)\}} = {\O(\sum_{v\in V}\ell(v)+|E|)}$.
We conclude that the overall time complexity is {$\O(|V|+|E|+{|V|\log L}+\min\{B \cdot |V|^2, |E|^2  \cdot \log |V| \cdot \log B\}+\sum_{v\in V}\ell(v)+|E|)$,
which, considering that $B = \O(L)$, simplifies to $\O(|V|\log L+\sum_{v\in V}\ell(v)+\min\{L \cdot |V|^2, |E|^2  \cdot \log |V| \cdot \log L\})$.}
\end{proof}

\subsection{The intersection number of triangle-free partially ordered graphs}

In this section we combine \cref{fromPO-to-LBtri-free-graphs,triangle-free-l-constrainedIN} into \cref{triangle-free-preceqIN}.

\trianglegreepreceqin*

\begin{proof}
We prove the statement by analyzing the  following algorithm.
\begin{enumerate}
    \item\label{demand'} Using \cref{fromPO-to-LBtri-free-graphs}, compute the $\alpha$-ranking $\ell:V\to \mathbb{Z}_+$ of $(G,\preceq)$.
\item\label{Uvarphi'} Using \cref{triangle-free-l-constrainedIN}, compute an optimal $\ell$-constrained intersection representation $(U,\varphi)$ of~$G$ such that $|\varphi(v)|\le \max\{\ell(v),\deg(v)\}$ for all $v\in V$.
\item\label{Upsi'} Using \cref{fromPO-to-LBtri-free-graphs}, transform $(U,\varphi)$ to an optimal $\ell$-constrained intersection representation $(U,\psi)$ of $G$ that is also an optimal intersection representation of $(G,\preceq)$.
\item\label{return'} Return $(U,\psi)$.
\end{enumerate}

\paragraph{Proof of correctness.}
By \cref{fromPO-to-LBtri-free-graphs,fromPO-to-LBtri-free-graphs}, the algorithm indeed computes an optimal intersection representation of $(G,\preceq)$.
Recall that the $\alpha$-ranking $\ell$ computed in step~\ref{demand'} is given by \cref{definition-of-ell}, that is,
\begin{equation*}\label{definition-of-w}
\ell(v) = \begin{cases}
 \deg(v) & \textrm{if } v \in M\,,\\
  \max\left\{\deg(v), 1+\max\limits_{u\colon  u \prec v} \ell(u)\right\} &
  \textrm{if } v \not\in M\,.
\end{cases}
\end{equation*}
Hence, for each $v\in V$, we have $b(v) = \ell(v)-\deg(v) = \max\{\ell(v)-\deg(v),0\}$.
By \cref{triangle-free-l-constrainedIN}, we have $\IN(G,\ell) = |E| + b(V) - \nu(G,b)$.
Furthermore, by \cref{fromPO-to-LBtri-free-graphs}, we have $\IN(G, \preceq) = \IN(G,\ell)$. 
Hence, $\IN(G, \preceq) = |E| + b(V) - \nu(G,b)$.

\paragraph{Time complexity analysis.}
By \cref{fromPO-to-LBtri-free-graphs}, step~\ref{demand'} can be done in time $\O(|V|+|E|)$.

By \cref{triangle-free-l-constrainedIN}, step \ref{Uvarphi'} can be done in time $\O(|V|\log L+\sum_{v\in V}\ell(v)+\min\{L \cdot |V|^2, |E|^2  \cdot \log |V| \cdot \log L\})$, where $L = 1+\max_{v\in V}\ell(v)$.
Using the facts that $L = \O(|V|)$ and $\sum_{v\in V}\ell(v) = \O(|V|+|E|)$, the time complexity of step \ref{Uvarphi'} simplifies to $\O(|V| \cdot \log |V|+\min\{|V|^3, |E|^2  \cdot \log^2  \cdot |V|\})$.

Note that for each vertex $v\in V$, we have $b(v)\le \ell(v)\le 2|V|$ (by \cref{fromPO-to-LBtri-free-graphs}).
Hence, $|U| = \IN(G,\preceq)= |E| + b(V) - \nu(G,b)\le |E|+\sum_{v\in V}b(v)\le |E|+2|V|^2$. 
By \cref{fromPO-to-LBtri-free-graphs}, step \ref{Upsi'} can be done in time $\mathcal{O}(|U| \cdot |V|+|E| \cdot \max_{v\in V}|\varphi(v)|)$, which, using the facts that $\max_{v\in V}|\varphi(v)|\le \max\{L,\max_{v\in V}\deg(v)\} = \O(|V|)$ and that $|U|= \O(|V|^2)$, simplifies to $\mathcal{O}(|V|^3)$.

The time complexity of step \ref{return'} is dominated by that of step \ref{Upsi'}.

The total time complexity is bounded by $\O(|V|+|E|+|V| \cdot \log |V|+\min\{|V|^3, |E|^2  \cdot \log^2  \cdot |V|\}+|V|^3)$, which simplifies to the claimed time complexity bound of $\O(|V|^3)$.
\end{proof}

\section{Further results and open questions}

\subsection{The bipartite case}

In a bipartite graph $G$, the maximum size of a $b$-matching is equal to the minimum $b$-weight of a vertex cover (see, e.g.,~\cite{MR1956926}).
Using this, we can give an alternative expression of the result in 
\cref{triangle-free-Hamiltonian-DIN-main} in the case of \textsl{bipartite} Hamiltonian DAGs.

\begin{corollary}\label{bipartite-Hamiltonian-DIN}
Let $D = (V,A)$ be a bipartite DAG with a Hamiltonian path $P = (v_1,\ldots, v_n)$.
Let $w:V\to \mathbb{Z}_+$ be a vertex weight function on $D$ defined recursively along $P$ as follows:
$w(v_1) = \deg(v_1)$, and for all $i\in \{2,\ldots,n\}$, we set $w(v_i) = \max\{w(v_{i-1})+1,\deg(v_i)\}$.
Let $b:V\to \mathbb{Z}_+$ be a capacity function on the vertices of $D$ defined by setting $b(v) = w(v)-\deg(v)$ for all $v\in V$.
Then $\DIN(D) = |A|+\alpha(D,b)$.
\end{corollary}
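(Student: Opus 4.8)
The plan is to obtain this as a direct corollary of \cref{triangle-free-Hamiltonian-DIN-main}. First I would check that the capacity function $b$ defined in the statement is literally the one appearing in \cref{triangle-free-Hamiltonian-DIN-main}. Since $b(v) = w(v) - \deg(v)$ with $w(v_1) = \deg(v_1)$ and $w(v_i) = \max\{w(v_{i-1})+1, \deg(v_i)\}$, we get $b(v_1) = 0$ and
\[
b(v_i) = \max\{w(v_{i-1})+1, \deg(v_i)\} - \deg(v_i) = \max\{w(v_{i-1}) + 1 - \deg(v_i),\, 0\};
\]
substituting $w(v_{i-1}) = b(v_{i-1}) + \deg(v_{i-1})$ turns this into exactly the recurrence $b(v_i) = \max\{b(v_{i-1}) + \deg(v_{i-1}) - \deg(v_i) + 1,\, 0\}$ of \cref{triangle-free-Hamiltonian-DIN-main}. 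Since every bipartite graph is triangle-free, \cref{triangle-free-Hamiltonian-DIN-main} applies and yields $\DIN(D) = |A| + b(V) - \nu(G,b)$, where $G$ is the underlying graph of $D$.

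It then remains to prove that $b(V) - \nu(G,b) = \alpha(D,b)$, and for this I would invoke the bipartite min-max duality quoted above: $\nu(G,b)$ equals the minimum $b$-weight $b(C)$ taken over all vertex covers $C$ of $G$. The complement of a vertex cover is an independent set and vice versa, so as $C$ ranges over the vertex covers of $G$, the set $V \setminus C$ ranges over all independent sets of $G$. Hence
\[
\nu(G,b) = \min_{C \text{ vertex cover of } G} b(C) = b(V) - \max_{S \text{ independent in } G} b(S) = b(V) - \alpha(G,b).
\]
Finally, independence in $D$ coincides with independence in its underlying graph $G$, so $\alpha(G,b) = \alpha(D,b)$; rearranging the displayed identity gives $b(V) - \nu(G,b) = \alpha(D,b)$, and therefore $\DIN(D) = |A| + \alpha(D,b)$.

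I do not expect any serious obstacle here: the statement is essentially \cref{triangle-free-Hamiltonian-DIN-main} combined with the K\"onig-type duality for bipartite $b$-matchings. The only two points needing care are (i) the one-line substitution verifying that the recursively defined $b$ of this corollary is the same capacity function used in the main theorem, and (ii) carrying out the complementation between vertex covers and independent sets with respect to the same undirected graph $G$, so that the identification $\alpha(G,b) = \alpha(D,b)$ is legitimate.
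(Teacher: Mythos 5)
Your proof is correct and follows essentially the same route as the paper's: apply \cref{triangle-free-Hamiltonian-DIN-main} and then use the bipartite $b$-matching/vertex-cover duality together with complementation between vertex covers and independent sets. In fact, your write-up is more careful than the paper's, which omits both the verification that the two recurrences for $b$ coincide and the (trivial) observation that bipartite graphs are triangle-free.
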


\begin{proof}
The result follows immediately from 
\cref{triangle-free-Hamiltonian-DIN-main}. Let $G$ be the the underlying undirected graph of $D$. Let $x:E \to \mathbb{Z}_+$ a 
$b$-matching for $G$ with respect to the demand function $b$ and such that its weight is $\nu(G, b)$.
Let $C$ be a vertex cover for $G$ of minimum weight
with respect to the vertex weight function $b$.
Since the complement of an independent set is a vertex cover, we also have $b(C) = b(V) - \alpha(G,b)$.
Then, since $G$ is bipartite we have that 
$\nu(G, b) = b(C)$, which yields the desired result.
\end{proof}

\subsection{Constant factor approximations} 

Before this paper, the only class of graphs for which a constant factor approximation polynomial-time algorithm was known for computing the DIN was the class of arborescences (see~\cite{IWOCA2022}). 
As a consequence of our results, we can significantly widen the class of DAGs where the problem admits a constant approximation.
More specifically, the DIN can be approximated in polynomial time to a constant factor on DAGs with bounded chain cover number whose underlying graph has bounded chromatic number.
Given an undirected graph $G$, the \emph{chromatic number} of $G$ is the minimum integer $k$ such that the vertex set of $G$ can be expressed as a union of $k$ independent sets.
A \emph{chain} in a DAG $D$ is a sequence $(v_1, \ldots, v_k)$ of distinct vertices of $D$ such that for all  $i\in \{1,\ldots, k-1\}$ there is a path from $v_i$ to $v_{i+1}$ in $D$.
A \emph{chain cover} in a DAG $D$ is a collection of chains such that every vertex in $D$ belongs to exactly one of the chains.
The \emph{chain cover number} of $D$ is the minimum cardinality of a chain cover in $D$.

\begin{proposition}
For every two positive integers $c$ and $k$, the DIN can be approximated in time $\mathcal{O}(n^2)$ to a factor of $2c^2k^2$ on DAGs with $n$ vertices having chain cover number at most $c$ and whose underlying graphs have chromatic number at most $k$.
\end{proposition}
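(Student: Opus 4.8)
The plan is to sandwich $\DIN(D)$ between two quantities that differ by a factor of $c^2k^2$, built from a single vertex weight function, and to show each quantity is computable in time $\mathcal{O}(n^2)$. First I would record the two combinatorial consequences of the hypotheses. Since the vertices of $D$ are covered by at most $c$ chains, every antichain of the reachability poset $(V,\preceq)$ has at most $c$ elements, because an antichain meets each chain at most once. Since $\chi(G)\le k$, the neighbourhood of every vertex induces a $k$-colourable graph, so $\adeg(v)\ge \deg(v)/k$, and moreover $\omega(G)\le k$.

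The heart of the argument is a lower bound valid for \emph{every} directed intersection representation $(U,\varphi)$ of $D$. The obstacle here — and the reason the Hamiltonian machinery behind \cref{triangle-free-Hamiltonian-DIN-main} does not apply directly — is that when $D$ is not Hamiltonian distinct vertices may receive color sets of equal cardinality, so a directed intersection representation need not be an intersection representation of $G$, and the clean bound $|\varphi(v)|\ge \adeg(v)$ can fail. I would repair this via a ``collapse only inside an antichain'' principle. Fix a color $\gamma$ and let $W_\gamma=\{v:\gamma\in\varphi(v)\}$; these vertices pairwise intersect, so in $G$ they induce a complete multipartite graph whose parts are the cardinality classes. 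Each part is an antichain (equal cardinality forces $\preceq$-incomparability), hence has size at most $c$, and the number of parts is at most $\omega(G)\le k$; thus $|W_\gamma|\le ck$ and, summing over colors, $\sum_{v}|\varphi(v)|\le ck\,|U|$. By the same phenomenon, taking a maximum independent set $S\subseteq N(v)$ and a witnessing color in $\varphi(v)$ for each member of $S$, two members receive the same witness only if they lie in a common cardinality class, which is an antichain of size $\le c$; hence $|\varphi(v)|\ge \adeg(v)/c\ge \deg(v)/(ck)$. Combining this with the strict monotonicity $|\varphi(u)|<|\varphi(v)|$ along $\preceq$ and inducting over the poset gives $|\varphi(v)|\ge \ell(v)/(ck)$, where $\ell(v)=\max\{\deg(v),\,1+\max_{u\prec v}\ell(u)\}$. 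Together with the double-counting bound this yields $\DIN(D)\ge \tfrac{1}{c^2k^2}\sum_v \ell(v)$.

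For the matching upper bound I would exhibit an explicit representation. Assign to each arc $a$ a private color $c_a$, place $c_a$ in both endpoints' sets, and pad each $\varphi(v)$ with fresh private colors up to cardinality $t(v)=\ell(v)$. Since $\ell(v)\ge\deg(v)$ and $\ell$ is strictly increasing along $\preceq$, this is a valid directed intersection representation — two vertices share a color if and only if they are adjacent, and the cardinalities are strictly ordered along every arc — and its number of colors is $\sum_v \ell(v)-|A|\le \sum_v \ell(v)$. Comparing with the lower bound gives an approximation ratio of $c^2k^2\le 2c^2k^2$.

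Finally I would verify the running time. The function $\ell$ is computed by one pass over a topological order of $D$ using $\ell(v)=\max\{\deg(v),1+\max_{(u,v)\in A}\ell(u)\}$, in time $\mathcal{O}(|V|+|A|)=\mathcal{O}(n^2)$; since $\ell(v)\le 2n$, the total size $\sum_v \ell(v)$ of the constructed representation is $\mathcal{O}(n^2)$, so it can be written down within the same bound. The main difficulty I expect is conceptual rather than computational: correctly isolating the two independent sources of color collapse (antichains of size $\le c$ and multipartite parts numbering $\le k$) and checking that they enter the lower bound multiplicatively, producing the factor $c^2k^2$ rather than an additive or larger loss; once that is pinned down, the construction and the complexity estimate are routine.
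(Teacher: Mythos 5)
Your proof is correct, and it takes a genuinely different route from the paper's. The paper combines two absolute bounds: it runs the generic construction of Liu et al.~\cite{MR4231959}, which outputs a directed intersection representation with at most $n^2$ colors in time $\mathcal{O}(n^2)$ on \emph{any} $n$-vertex DAG, and it lower-bounds the optimum by $n^2/(2c^2k^2)$ using the \emph{primal} consequences of the hypotheses: by pigeonhole some chain has at least $n/c$ vertices, within it some independent set $I$ has at least $n/(ck)$ vertices, and on $I$ the color sets are pairwise disjoint with pairwise distinct cardinalities, so at least $\binom{|I|+1}{2}\ge n^2/(2c^2k^2)$ colors are needed. You work instead with the \emph{dual} consequences (every antichain has size at most $c$; each color class induces a complete multipartite graph with at most $\omega(G)\le k$ parts, each part an antichain), and you sandwich $\DIN(D)$ between $\frac{1}{c^2k^2}\sum_v\ell(v)$ and the cost $\sum_v\ell(v)-|A|$ of your explicit arc-colors-plus-padding representation. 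Your route buys three things: it is self-contained (no appeal to the cited external construction), it gives an instance-sensitive guarantee relative to $\sum_v\ell(v)$ rather than the worst-case $n^2$, and it actually proves the slightly stronger ratio $c^2k^2$. The paper's route is shorter given the black-box construction, and its lower bound has independent interest, since it shows $\DIN(D)=\Theta(n^2)$ on this whole class. The delicate steps of your argument all check out: a shared color plus nonadjacency forces equal cardinalities, equal cardinalities force $\preceq$-incomparability (so both the multipartite parts and your witness fibers are antichains of size at most $c$), the maximum of $\ell$ over $\{u : u\prec v\}$ is attained at an in-neighbor of $v$ so one topological pass suffices, and your padded representation satisfies both directions of the defining equivalence because $\ell$ is strictly increasing along $\preceq$ and padding colors are private.
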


\begin{proof}
In the proof of~\cite[Lemma II.1]{MR4231959} and using Remark II.2, Liu et al.\ give a procedure for computing a directed intersection representation of an $n$-vertex DAG using at most $\left\lfloor 5(n+1)^2/8-(n+1)/4\right\rfloor \le n^2$ colors.
Each of the steps of the algorithm requires $\mathcal{O}(n^2)$ time.

Given a DAG $D = (V,A)$ with at least two vertices and chain cover number at most $c$ such that the chromatic number of $U(D)$ is at most $k$, we apply the mentioned algorithm of Liu et al.\ to $D$.
We claim that the obtained number of colors, which is at most $n^2$, exceeds $\DIN(D)$ by a factor of at most $2c^2k^2$.
Since $D$ admits a chain cover with at most $c$ chains, at least one of these chains contains at least $n/c$ vertices.
Fix a chain $C = (v_1,\ldots, v_p)$ in $D$ such that $p\ge n/c$. 
In any directed intersection representation $\varphi$ of $D$, the cardinalities of the color sets need to be strictly increasing along each chain, and in particular along $C$.
Furthermore, by assumption on $D$, the vertex set of $D$ can be partitioned into $k$ independent sets.
It follows that there exists an independent set $I$ in $D$ such that $I\subseteq V(C) = \{v_1,\ldots, v_p\}$ and $|I|\ge p/k\ge n/(ck)$.
Since for any two distinct vertices $u,v\in I$, the cardinalities of their color sets $\varphi(u)$ and $\varphi(v)$ are distinct, we obtain that these two sets are disjoint.
Consequently, the total number of colors used on vertices of $I$ is at least $\sum_{j = 1}^{|I|}j = \binom{|I|+1}{2}$, which is at least $|I|^2/2\ge n^2/(2c^2k^2)$.
Therefore, $\DIN(D)\ge n^2/(2c^2k^2)$ and the claimed approximation guarantee follows.
\end{proof}

\subsection{Open questions}

We have shown in \cref{properties-of-alpha-ranking} that $\IN(G,\ell)\le \IN(G, \preceq)$ holds for every partially ordered graph, where $\ell$ is the $\alpha$-ranking of $(G,\preceq)$, and that the converse inequality does not hold in general (\cref{prop:strict}).
Can the difference between these two parameters be 
arbitrarily large?
Furthermore, if this is the case, is the intersection number of the partially ordered graph $(G, \preceq)$ bounded from above by some function of the $\ell$-constrained intersection number of $G$?

Another general question suggested by this work is about the extent, in terms of graph classes, to which the relationships between the different intersection representations hold (see the diagram in \cref{Problems-relationships}). 
In particular, does the fact that the $\ell$-constrained intersection number of a graph $G$ generalizes the computation of intersection number over any partially ordered graph $(G, \preceq)$ hold beyond the class of diamond-free graphs?

Another open question is whether a result analogous to \cref{triangle-free-preceqIN}, giving a polynomial-time algorithm for computing the intersection number of a partially ordered triangle-free graph, could be obtained for the more general class of partially ordered diamond-free graphs.
We were able to show, in
\cref{thm:diamond-free},
that for any partially ordered diamond-free graph $(G,\preceq)$, the computation of an optimal intersection representation for $(G, \preceq)$ can be reduced to the computation of an optimal $\ell$-constrained intersection representation of $G$ for some polynomially computable demand function $\ell$.
However, it is open whether an optimal $\ell$-constrained intersection representation of a diamond-free graph $G$ can be computed in polynomial time.

In this regard, let us remark that the approach of reducing the problem to the appropriate generalization of the $b$-matching problem does not seem to work, at least not as simply as in the triangle-free case.
The appropriate generalization of the $b$-matching problem would be the following problem:
Given a graph $G$ and a function $b:V(G)\to \mathbb{Z}_+$, find a function $x:\mathcal{C}(G)\to \mathbb{Z}_+$, where $\mathcal{C}(G)$ is the set of maximal cliques of $G$, such that for all $v\in V(G)$, the sum of the values $x(C)$ over all maximal cliques $C$ containing $v$ does not exceed $b(v)$, and the sum $\sum_{C\in \mathcal{C}(G)}x(C)$ is maximized.
It turns out that this problem is \textsf{NP}-hard when restricted to diamond-free graphs, even in the case of unit capacity function.
While the proof is similar to the proof of \textsf{NP}-completeness of the triangle packing problem in line graphs due to Guruswami et al.~\cite{DBLP:conf/wg/GuruswamiRCCW98}, we include it for the sake of completeness.

\begin{theorem}
Given a diamond-free graph $G$ and an integer $k$, it is \textsf{NP}-complete to determine if $G$ admits $k$ pairwise disjoint maximal cliques.
\end{theorem}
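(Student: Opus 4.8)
The plan is to prove membership in NP and then establish hardness by a reduction from \textsc{Independent Set} on triangle-free graphs. Membership in NP is immediate: a certificate is a collection of $k$ vertex subsets of $G$, and in polynomial time one checks that each is a clique, that each is \emph{maximal} (no vertex outside it is adjacent to all its members), and that they are pairwise disjoint.

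For hardness, I would reduce from \textsc{Independent Set} restricted to triangle-free graphs of minimum degree at least $2$. This restriction is NP-complete: starting from \textsc{Independent Set} on cubic graphs and replacing each edge by a path of length three makes the graph triangle-free, preserves the property that every vertex has degree at least $2$, and shifts the independence number by exactly $|E|$ (Poljak's identity), so the decision problem transfers. Given such a graph $H$, the reduction outputs $G := L(H)$, the line graph of $H$, together with the same parameter $k$; this is clearly computable in polynomial time.

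The heart of the argument is the structural correspondence between $H$ and the maximal cliques of $G = L(H)$. First I would verify that $G$ is diamond-free: since two edges of a simple graph meet in at most one vertex, every edge of $L(H)$ is labelled by a unique common endpoint, so by \cref{lem:diamond-free-characterizations} it suffices to check that every edge of $L(H)$ lies in a unique maximal clique. Concretely, a Helly-type argument shows that in a triangle-free graph any family of pairwise intersecting edges shares a common vertex: if three edges were pairwise intersecting but had no common vertex, they would form a triangle of $H$. Hence every clique of $L(H)$ of size at least $2$ is a \emph{star} $E_v = \{e \in E(H) : v \in e\}$, and the maximal cliques of $G$ are exactly the stars $E_v$ with $\deg(v) \ge 2$; the minimum-degree hypothesis rules out maximal cliques of size $1$, and $v \mapsto E_v$ is a bijection. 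Moreover $E_u$ and $E_v$ are disjoint in $L(H)$ if and only if $u$ and $v$ are non-adjacent in $H$, since a shared element would be precisely the edge $uv$.

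With this correspondence the reduction closes: a family of $k$ pairwise disjoint maximal cliques of $G$ is exactly a set of $k$ vertices of $H$ whose stars are pairwise disjoint, i.e.\ an independent set of size $k$ in $H$, and conversely; thus the maximum number of pairwise disjoint maximal cliques of $G$ equals the independence number $\alpha(H)$, and $G$ admits $k$ such cliques if and only if $H$ has an independent set of size $k$. I expect the main obstacle to be not the hardness gadgetry but pinning down the structural lemma cleanly — in particular arguing that \emph{every} maximal clique of $L(H)$ is a full star $E_v$ (the triangle-free Helly step) and that the degree hypotheses leave no stray small maximal cliques. This line-graph analysis is the point at which the proof parallels the treatment of triangle packing in line graphs by Guruswami et al.
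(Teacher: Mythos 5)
Your proof is correct and follows essentially the same route as the paper: both reduce Independent Set on a triangle-free class of graphs to the disjoint maximal cliques problem via the line graph, using the fact that in the line graph of a triangle-free graph the maximal cliques are exactly the edge-stars of vertices, and two stars are disjoint precisely when the corresponding vertices are non-adjacent. The only differences are cosmetic: the paper starts from Independent Set on triangle-free \emph{cubic} graphs (citing its NP-completeness and the triangle-packing correspondence of Guruswami et al.), whereas you start from triangle-free graphs of minimum degree two (justified by Poljak's subdivision argument) and prove the structural facts---diamond-freeness and the star/maximal-clique correspondence---directly rather than by citation.
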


\begin{proof}
The problem is clearly in \textsf{NP}.
To prove \textsf{NP}-completeness, we make a reduction from  \textsc{Independent Set in Triangle-Free Cubic Graphs},\footnote{A graph $G$ is \emph{cubic} if every vertex of $G$ is incident with precisely three edges.} which is the following \textsf{NP}-complete problem (see~\cite{uehara1996np,DBLP:conf/wg/GuruswamiRCCW98}):
Given a triangle-free cubic graph $G = (V,E)$ and an integer $k$, does $G$ contain an independent set $I$ such that $|I|\ge k$?
Let $(G,k)$ be an input instance to \textsc{Independent Set in Triangle-Free Cubic Graphs}.
Let $H$ be the line graph of $G$, that is, the graph with vertex set $E$, in which two distinct vertices are adjacent if and only if the corresponding edges of $G$ share an endpoint.
Then, $H$ is diamond-free (this follows, e.g., from~\cite[Theorem 4]{zbMATH06766595}).
Since $G$ is cubic and triangle-free, the maximal cliques in $H$ correspond precisely to the sets of edges incident with a fixed vertex of~$G$; consequently, the maximal cliques in $H$ are precisely its triangles.
Furthermore, $G$ has an independent set of size $k$ if and only if $H$ has $k$ pairwise disjoint triangles (see~\cite[Lemma 2]{DBLP:conf/wg/GuruswamiRCCW98}).
Therefore, $G$ has an independent set of cardinality~$k$ if and only if the diamond-free graph $H$ has $k$ pairwise disjoint maximal cliques, and the claimed \textsf{NP}-completeness follows.
\end{proof}

\section*{Acknowledgments}
We would like to thank Andrea Caucchiolo for several 
insightful discussions we had on some of the results contained in this paper.

\printbibliography

\end{document}